\newcommand{\bigoh}{\ensuremath{{\mathcal O}}}
\newcommand{\N}{\ensuremath{{\mathbb N}}}
\newtheorem{theorem}{Theorem} 
\newtheorem{lemma}{Lemma}
\newtheorem{definition}{Definition}
\newcommand{\problem}{{\sc SOAC}\xspace} 
\newcommand{\mproblem}{{\sc mSOAC}\xspace} 
\newcommand{\cmax}{\mathtt{c_{max}}}
\newcommand{\fen}{\ensuremath{\mathtt{fen}}}
\newcommand{\twd}{\ensuremath{\mathtt{twd}}}
\newcommand{\stcw}{\ensuremath{\mathtt{stcw}}}
\DeclareRobustCommand{\rvdots}{%
  \vbox{
    \baselineskip4\p@\lineskiplimit\z@
    \kern-\p@
    \hbox{.}\hbox{.}\hbox{.}
  }}
\newcommand{\defproblem}[3]{
  \vspace{1mm}
\noindent\fbox{
  \begin{minipage}{0.96\textwidth}
  \begin{tabular*}{\textwidth}{@{\extracolsep{\fill}}lr} #1 \\ \end{tabular*}
  {\bf{Input:}} #2  \\
  {\bf{Question:}} #3
  \end{minipage}
  }
  \vspace{1mm}
}
\newcommand{\cut}{{\mathbf{cut}}}
\newcommand{\adh}{{\mathbf{adh}}}
\newcommand{\tor}{{\mathbf{tor}}}
\newcommand{\tcw}{{\mathbf{tcw}}}
\newcommand{\ecw}{{\mathbf{ecw}}}
\newcommand{\loc}{\operatorname{loc}}
\newcommand{\forgottenG}{\mathcal{G}}
\newcommand{\Rec}{\text{Record}}
\renewcommand{\R}{\mathcal{R}}
\renewcommand{\D}{\mathcal{D}}
\newcommand{\Sout}{\mathcal{S}_\emph{out}}
\newcommand{\Sin}{\mathcal{S}_\emph{in}}
\newcommand{\Aout}{\mathcal{A}_\emph{out}}
\newcommand{\Ain}{\mathcal{A}_\emph{in}}
\newcommand{\Imp}{\mathcal{Z}}
\newcommand{\snap}{\text{Snap}}
\title{The Complexity of Optimizing Atomic Congestion\thanks{An extended abstract of this paper appeared in the proceedings of AAAI 2024~\cite{DBLP:conf/aaai/BrandGKM24}.}}
\author[a]{Cornelius Brand}
\author[b]{Robert Ganian}
\author[c]{Subrahmanyam Kalyanasundaram}
\author[b]{Fionn {Mc~Inerney}$^\dagger$}
\affil[a]{Algorithms \& Complexity Theory Group, Regensburg University, Germany}
\affil[b]{Algorithms and Complexity Group, TU Wien, Austria}
\affil[c]{Department of Computer Science and Engineering, IIT Hyderabad, India}
\date{}
\begin{document}

\maketitle

\begin{abstract}
Atomic congestion games are a classic topic in network design, routing, and algorithmic game theory, and are capable of modeling congestion and flow optimization tasks in various application areas.
While both the price of anarchy for such games as well as the computational complexity of computing their Nash equilibria are by now well-understood,
the computational complexity of computing a \emph{system-optimal} set of strategies---that is, a centrally planned routing that minimizes the average cost of agents---is severely understudied in the literature. 
We close this gap by identifying the exact boundaries of tractability for the problem through the lens of the parameterized complexity paradigm. 
After showing that the problem remains highly intractable even on extremely simple networks, we obtain a set of results which demonstrate that the structural parameters which control the computational (in)tractability of the problem are not vertex-separator based in nature (such as, e.g., treewidth), but rather based on edge separators. We conclude by extending our analysis towards the (even more challenging) min-max variant of the problem.
\end{abstract}

{\bf Keywords:} Atomic congestion games, System optimum, Routing, Parameterized complexity, Slim treecut width.

\renewcommand*{\thefootnote}{\fnsymbol{footnote}}
\footnotetext[2]{Corresponding author. Email address: fmcinern@gmail.com. Postal address: Technische Universit\"{a}t Wien, Favoritenstrasse 9-11, E192-01, 1040 Vienna, Austria.}
\renewcommand*{\thefootnote}{\arabic{footnote}}

\section{Introduction}
Congestion games are a by-now classic and widely studied model of network resource sharing.
Introduced by Rosenthal~\cite{Rosenthal73}, congestion games and their innumerable variants and extensions have been the focus of a vast body of literature, spanning fields from algorithmic game theory~\cite{CominettiS0M19} over routing~\cite{KunniyurS03} and network design~\cite{AnshelevichDKTWR04}, to diverse contexts within artificial intelligence~\cite{AshlagiMT07,MeirTBK12,MarchesiC019,HarksHKMS22}, both applied and theoretical.

The basic setup of congestion games comprises a network (modeled as a directed graph) and a set of agents that each have an origin and a destination.\footnote{The setting where there is an infinite number of agents and single agents are infinitesimally small is called \emph{non-atomic}; we focus on the classic, \emph{atomic} case, where agents are individual entities.}
The agents need to decide which routes to take in order to reach their destination in a way that minimizes the cost of their route, where the cost can capture, e.g., the amount of time or resources required.
The eponymous \emph{congestion} enters the scene as follows: the cost accrued by a single agent when traversing a link in the network depends on the number of agents using that link, as described by the link's \emph{latency function}. In essence, the latency function captures how the cost of using each link changes depending on the number of agents using it; depending on the context, more agents using a link could lead to each of them paying a greater cost (e.g., when dealing with traffic congestion) or a lower cost (e.g., when dealing with logistical supply chains), up to a maximum capacity for that link.

It is well-known that selfish strategies in congestion games may not lead to optimal outcomes for all agents,
let alone a \emph{system-optimal} outcome\footnote{In the literature, such outcomes are sometimes called the (\emph{social} or \emph{collective}) \emph{optimum}.} (that is, one achieving the minimum average cost)~\cite{SharonARBS18}. 
In fact, the existence of Nash equilibria for these games was the seminal question investigated by Rosenthal~\cite{Rosenthal73}, and is still of interest in economics and game theory today. 
Notably, the \emph{price of anarchy} for these games has by now been determined~\cite{ChristodoulouK05,AwerbuchAE05}. The price of anarchy in this context is defined as the ratio 
\begin{align} \label{eq:poa}
\tag{PoA}
\sup\nolimits_{S } \frac{\mathrm{cost}(S)}{\mathrm{cost}(S_{\mathrm{sys}})},
\end{align}
where the supremum ranges over all Nash equilibria, $\mathrm{cost}(S)$ is the cost of a set of strategies $S$ for the agents, and $S_{\mathrm{sys}}$ is a \emph{system optimum}, minimizing the average cost over all agents.
In addition, the computational complexity of computing Nash equilibria is equally well-studied~\cite{AckermannRV06,FabrikantPT04}; see also the many recent works on the problem~\cite{HarksHKMS22,0040XPRT22}. 

While the price of anarchy defined in~\eqref{eq:poa} as well as the cost of Nash equilibria $\mathrm{cost}(S)$ (i.e.,~the numerator in~\eqref{eq:poa}) have been extensively treated in the literature on congestion games, it may come as a surprise that, to the best of our knowledge, almost nothing is known about the computational complexity of computing the \emph{denominator} $\mathrm{cost}(S_\mathrm{sys})$ of~\eqref{eq:poa}, that is, determining a \emph{system-optimal set of strategies} for the players. Far from just an intellectual curiosity, applications, e.g., in road or internet traffic routing and planning, make this a pressing question, especially given the rapid developments in autonomous driving systems and the widely adopted political strategy of emphasizing public transportation for its lower environmental impact, which is usually centrally planned and routed, as opposed to individual transport~\cite{sharon2017real,sharon2017network,chen2020path,sharon2021alleviating,jalota2023balancing}. 

One possible reason for this gap may lie in the fact that even the most restricted instances of centrally routing a set of agents across a network in a socially optimal manner become hopelessly hard from the perspective of classical computational complexity theory. To illustrate the severity of this phenomenon, several of these basic classes of intractable instances are described in Section~\ref{sec:nph}.
Another explanation for this blind spot in the literature can possibly be found in the fact that, as far as the price of anarchy is concerned, the network structure itself does not appear to play a significant role, whereas this changes drastically when it comes to actually computing a system-optimal set of strategies.

A particular approach that has proven immensely useful for computationally intractable problems lies in employing the rich toolset offered by \emph{parameterized complexity theory}~\cite{DowneyF13,CyganFKLMPPS15} in order to obtain a rigorous, more fine-grained and detailed description of the computational complexity of a problem. The central aim of such an endeavour is to identify the structural properties of the input---captured via numerical \emph{parameters}---which give rise to fixed-parameter algorithms for the problem (see Section~\ref{sec:prelims}).
Some examples where the parameterized complexity toolset has been successfully applied in the context of Artificial Intelligence research include the series of works on Hedonic Games~\cite{BoehmerE20,BoehmerE20b,GanianHKSS22}, Integer Programming~\cite{GanianO18,EibenGKO19,DvorakEGKO21,ChanCKKP22}, Data Completion~\cite{GanianKOS18,DahiyaFPS21,GanianHKOS22,KoanaFN23}, and Bayesian Network Learning~\cite{OrdyniakS13,GruttemeierKM21,GanianK21,GruttemeierK22}. 

\paragraph{Our Contributions.}
As mentioned above, the problem of computing system-optimal strategies in atomic congestion games (\problem) is extremely hard in terms of classical complexity theory. The core approach of parameterized complexity analysis is to identify those structural properties that a problem instance should have that, even though exceedingly hard in the general case, give rise to its fixed-parameter tractability. Since the network is modeled as a graph, a natural first choice would be to parameterize by the well-established \emph{treewidth} (of the underlying undirected graph)~\cite{RobertsonS86}. Unfortunately, as our first result, we show that the problem remains \NP-hard not only on networks which have treewidth $2$, but even on networks consisting of a star plus an additional vertex (Theorem~\ref{hard-vc}). This result rules out not only the use of treewidth, but also of virtually all other reasonable graph measures, as a single parameter to solve the problem in full generality. 

The above lower bound essentially means that, in order to achieve progress, one needs to combine structural parameters with some auxiliary parameterization of the problem instances. In the context of congestion games, it would seem tempting to consider the number of agents as such an auxiliary parameter, however that would severely restrict any obtained algorithms: while one could reasonably expect that networks of interest may be well-structured, the number of agents in relevant instances of congestion games is typically large, and hence, does not constitute a well-applicable parameter. Instead, here we consider the maximum capacity $\cmax$ of a link in the network as an auxiliary parameter---a value which is never larger, but could be much lower, than the total number of agents in the whole network.

It is important to note that \problem\ remains extremely challenging even when parameterized by $\cmax$. In fact, even if $\cmax$ is fixed to a small constant, the problem is \NP-hard when restricted to networks of constant treewidth (Theorem~\ref{hard-td}); the same reduction also rules out the use of other network parameters based on decompositions along \emph{small vertex separators} (such as \emph{treedepth}~\cite{sparsity}).
However, as we show in our main algorithmic result (Theorem~\ref{thm:stcwfpt}), \problem\ is fixed-parameter tractable when parameterized by $\cmax$ plus a suitable measure that guarantees the network's decomposability along \emph{small edge cuts}. 

Basic examples of such measures include the treewidth plus the maximum degree of the network~\cite{OrdyniakS13,GozupekOPSS17}, or the
 \emph{feedback edge number} (i.e., the edge deletion distance to acyclic networks)~\cite{KoanaKNNZ21,FuchsleMNR22}. In our contribution, we build on the recently introduced \emph{spanning tree} decompositions~\cite{GanianK21,GanianK22} to provide a significantly more general result---in particular, we develop
a highly non-trivial dynamic programming algorithm to establish fixed-parameter tractability with respect to the recently introduced slim variant of \emph{treecut width}. We complement Theorem~\ref{thm:stcwfpt} with lower bounds (Theorems~\ref{hard-dags}, \ref{hard-tw-delta}, and ~\ref{W1-tcw}) that show the result to be an essentially tight delimitation of the exact boundaries of tractability for \problem.

In the final section of this article, we focus on a more general variant of \problem, where instead of requiring all agents to be routed to their destinations, we ask to minimize the system optimum while routing as many agents as possible (or, equivalently, when at most $\alpha$ agents may be left unrouted).
This problem, motivated in part by similar lines of investigation conducted for, e.g., multi-agent path finding~\cite{Huang0KD22} and vehicle routing~\cite{PhamHVN22,AbuMonsharA22}, can be seen as a ``min-max'' variant of \problem, and hence, we denote it \mproblem. Crucially, \mproblem
is even more challenging than \problem: it remains \NP-hard on bidirected trees even for $\cmax=1$~\cite{ErlebachJansen01}, and, perhaps even more surprisingly, is left open on very simple network structures, such as bounded-capacity star networks, when parameterized by $\alpha$. Be that as it may, as our final result, we show that on bounded-degree networks, the spanning-tree based algorithm obtained for \problem\ can be lifted to also solve \mproblem\ via a fixed-parameter algorithm when parameterized by the treewidth of the network, along with $\alpha$ and $\cmax$.

A mind-map of our results is provided in Figure~\ref{fig:mindmap}.

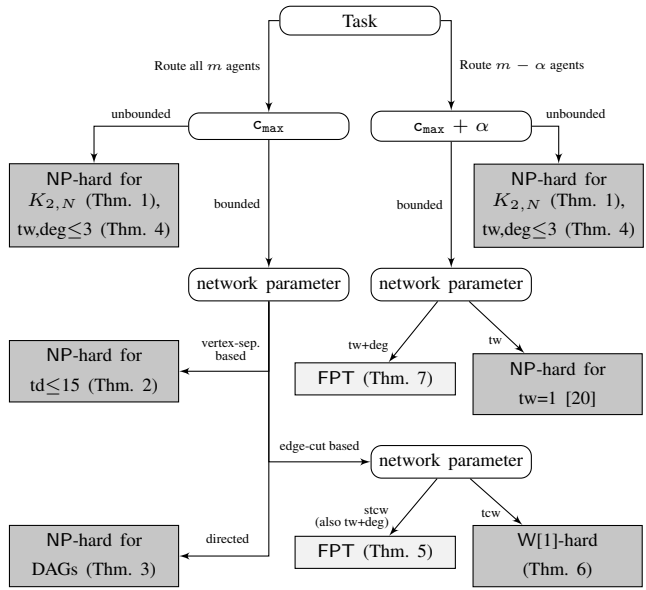
\begin{figure}
\tikzstyle{block} = [rectangle, draw, fill=white!20, 
    text width=6em, text centered, rounded corners, minimum height=1em]
\tikzstyle{line} = [draw, -latex']
\tikzstyle{cloud} = [draw, text centered, rectangle,fill=gray!10, node distance=0cm, text width=6em,
    minimum height=0.0em]
\tikzstyle{cloudhard} = [draw, text centered, rectangle,fill=gray!45, node distance=0cm, text width=6.37em,
    minimum height=0.0em]
    \centering
    \begin{tikzpicture}[node distance =0cm, auto]
    \small
    \centering
    \node [block] (routing) {\scriptsize{Task}};
    \node [block, left of=routing, below of=routing, yshift=-0.2cm, node distance=1.2cm] (cmax1) {\scriptsize{$\cmax$}};
    \node [block, right of=routing, below of=routing, yshift=-0.2cm, node distance=1.2cm] (cmax2) {\scriptsize{$\cmax+\alpha$}};
    \node [block, below of=cmax1, node distance=2.1cm] (sep) {\scriptsize{network parameter}};
    \node [cloudhard, left of=cmax1, below of=cmax1, xshift=-0.1cm, yshift=1.151cm, node distance=2.2cm] (hard1) {\scriptsize \NP-hard for $K_{2,N}$ (Thm. \ref{hard-vc}), \\ tw,deg$\leq$3 (Thm.~\ref{hard-tw-delta})};
    \node [cloudhard, below of=hard1, node distance=2.2cm] (hard2) {\scriptsize \NP-hard for td$\leq$15 (Thm.~\ref{hard-td})};
    \node [cloudhard, below of=hard2, node distance=2.45cm] (harddag) {\scriptsize \NP-hard for DAGs (Thm.~\ref{hard-dags})};
    \node [draw=none, below of=sep, node distance = 1.5cm] (branch) {};
    \node [cloudhard, below right of=cmax2, yshift=0.36cm, node distance=2cm] (hard4) {\scriptsize \NP-hard for $K_{2,N}$ (Thm. \ref{hard-vc}), \\ tw,deg$\leq$3 (Thm.~\ref{hard-tw-delta})};
    \node [block, below of=cmax2, node distance=2.1cm] (param2) {\scriptsize network parameter};
    \node [cloud, left of=param2, below of=param2, node distance=0.85cm, xshift=-0.15cm,yshift=-0.4cm] (fpt2) {\scriptsize \FPT{} (Thm. \ref{thm:msoacfpt})};
    \node [draw=none, right of= cmax2, node distance=2cm] (branch2) {};
    \node [draw=none, below of= param2, node distance=0.85cm] (branch3) {};
    \node [cloudhard, right of = branch3, node distance=1.4cm,yshift=-0.47cm] (hard5) {\scriptsize \NP-hard for tw=1 \cite{ErlebachJansen01}};
    \node [block, right of=branch, below of=param2, yshift=-1.35cm, xshift=-1cm, node distance=1cm] (param) {\scriptsize network parameter};
    \node [cloudhard, right of=param, below of=param, xshift=0.3cm, yshift=-0.15cm, node distance=1.1cm] (hard3) {\scriptsize \W[1]-hard (Thm. \ref{W1-tcw})};
    \node [cloud, left of=param, below of=param, xshift=0.05cm, yshift=-0.15cm, node distance=1.05cm] (fpt1) {\scriptsize \FPT{} (Thm. \ref{thm:stcwfpt})};
    
    \path [line] (routing) -| node [near end, left] {\tiny{Route all $m$ agents}} (cmax1);
    \path [line] (routing) -| node [near end,align=center] {\tiny Route $m-\alpha$ agents} (cmax2);
    \path [line] (cmax1.south) -| node [near end,left] {\tiny bounded} (sep);
    \path [line] (cmax1) -| node [near start, above] {\tiny unbounded} (hard1);
    \path [line] (sep.south) |- node [near start, below, xshift=-0.5cm, yshift=0.3cm, label={[align=center,font=\tiny\linespread{0.8}\selectfont]below:\tiny vertex-sep. \\ \tiny based}] {} (hard2);
    \path [line] (sep.south) |- node [near end, above] {\tiny edge-cut based} (param);
    \path [line] (sep.south) |- node [near end, above, xshift=0.05cm, yshift=0.4cm, label={[align=center]below:\tiny directed}] {} (harddag);
    \path [line] (param) -- node [near end, left] {\tiny tcw} (hard3);
    \path [line] (param) -- node [near end, left, align=right,font=\tiny\linespread{0.8}\selectfont,yshift=0.05cm,xshift=-0.05cm] {\tiny stcw\\ \tiny (also tw+deg)} (fpt1);
    \path [line] (cmax2.east) -| node [near start, above right, xshift=-0.1cm] {\tiny unbounded} (hard4);
    \path [line] (cmax2) -- node [left] {\tiny bounded}  (param2);
    \path [line] (param2) -- node [near end, left,xshift=-0.05cm,yshift=0.02cm] {\tiny tw+deg} (fpt2);
    \path [line] (param2) -- node [near end, left] {\tiny tw} (hard5);
	\end{tikzpicture}
\caption{A mind map of our results on computing system-optimal strategies in congestion games. The formal problem definition as well as a discussion of the considered parameters is provided in Section~\ref{sec:prelims}; here, tw stands for treewidth, deg stands for maximum degree, td stands for \emph{treedepth}, (s)tcw stands for (slim) treecut width, and DAGs stands for directed acyclic graphs.
}
\label{fig:mindmap}
\end{figure}

\section{Preliminaries}\label{sec:prelims}

For a positive integer $i\in \mathbb{N}$, we let $[i]=\{1,2,\dots,i\}$.
We refer to the book by Diestel~\cite{Diestel12} for
standard graph terminology. 
While the networks are modeled as directed graphs (i.e., digraphs), many of our results use the skeletons (i.e., the underlying undirected graphs) of these digraphs. The \emph{skeleton} $\underline G$ of a directed graph $G$ is the simple undirected graph obtained by replacing each arc in $G$ by an undirected edge.
The graph class $K_{i,N}=\{K_{i,j}~|~j\in \mathbb{N}\}$ is the class of all complete bipartite graphs where one side has size $i$.

\paragraph{Formal Problem Definition.}
Given a digraph $G=(V,E)$, let $\mathcal{P}$ be the set of all directed paths in $G$.  Given a set $A=\{a_1,\dots,a_m\}$ of agents where each agent $a_i$ is associated with a tuple $(s_i,t_i)\in V^2$, a \emph{flow assignment} $F$ is a mapping from $A$ to $\mathcal{P}$ such that each agent $a_i$ is mapped to a directed path from $s_i$ to $t_i$. For an arc $e\in E$, let $f_F(e)=|\{a~|~a\in A \wedge e\in F(a)\}|$ be the number of agents whose flow passes through $e$; when $F$ is clear from the context, we omit it and simply use $f(e)$ instead.

Intuitively, our primary problem of interest asks to compute a flow assignment of all the agents that minimizes the total cost. However, in order to formalize the algorithmic lower bounds obtained for the problem, we follow the standard practice of formulating the problem as an equivalent decision problem (see below). To avoid any doubts, we remark that all our algorithmic results are constructive and can also immediately output the minimum cost of a flow assignment with the required properties.

\defproblem{System Optimum Atomic Congestion (\problem)}{A digraph $G=(V,E)$, a positive integer $\lambda$, a set $A=\{a_1,\dots,a_m\}$ of agents where each agent $a_i$ is associated with a tuple $(s_i,t_i)\in V^2$, 
and for each arc $e\in E$, a latency function $\ell_e: [m] \rightarrow \mathbb{R}_{\geq 0} \cup \{\infty\}$.}{Does there exist a flow assignment $F$ such that $\sum_{e\in E}f_F(e)\cdot \ell_e(f_F(e)) \leq \lambda$?}

In specific settings studied in the literature, the latency function is sometimes required to satisfy certain additional conditions, such as being non-decreasing.
As illustrative examples, observe that when agents represent individual vehicles in a traffic network, the latency function will typically be increasing (the cost of 100 agents using a single link is greater than 100 times the cost of that link when it is used by a single agent), but if agents represent individual parcels or shipments in a logistics network, one would expect it to be decreasing (the cost of 100 agents using a single link would be lower than 100 times the cost of that link when it is used by a single agent).
In order to capture as wide a range of scenarios as possible---including, e.g.,~buffered or batch-wise processing at network nodes leading to decreasing or even oscillating latencies, respectively---our study targets the problem with arbitrary latency functions.

Given a latency function $\ell_e: [m] \rightarrow \mathbb{R}_{\geq 0}\cup\{\infty\}$ for an arc $e$, let the capacity $c_e$ of $e$ be defined as the maximum admissible value of the latency function, i.e.,~$\max\{z~|~\ell_e(z)\neq \infty\}$. Let the \emph{maximum capacity} of a network be defined as $\cmax:=\max_{e\in E}c(e)$. Crucially, while the maximum capacity can never exceed the number $m$ of agents in the network, one can reasonably expect it to be much smaller than $m$ in more complex networks.

\paragraph{Parameterized Complexity Theory.}
In parameterized complexity~\cite{CyganFKLMPPS15,DowneyF13,Niedermeier06}, the
running-time of an algorithm is studied with respect to a parameter
$k\in\mathbb{N}$ and input size~$n$. The basic idea is to find a parameter
that describes the structure of the instance such that the
combinatorial explosion can be confined to this parameter. In this
respect, the most favorable complexity class is \FPT\
(\textit{fixed-parameter tractable}), which contains all problems that
can be decided by an algorithm running in time $f(k)\cdot
n^{\bigoh(1)}$, where $f$ is a computable function. Algorithms with
this running-time are called \emph{fixed-parameter algorithms}. 
A basic way of excluding fixed-parameter tractability for a parameterized problem is to show that it remains \NP-hard even when the parameter $k$ is bounded by a constant. However, this is not always possible: some problems can be solved in non-uniform polynomial time for each fixed value of the parameter. In these cases, establishing hardness for the complexity class $\W[1]$ via a \emph{parameterized reduction} rules out the existence of a fixed-parameter algorithm under the well-established assumption that $\W[1]\neq \FPT$. A parameterized reduction can be thought of as a classical polynomial-time reduction, but with the distinction that it (1) can run in fixed-parameter time instead of polynomial time, and (2) must bound the parameter of the output instance by a function of the parameter of the initial instance.

A natural and typical direction in parameterized complexity is to consider parameters tied to the structural properties of the instance. In many cases, it turns out that achieving fixed-parameter tractability is only possible if one parameterizes by multiple properties of the instances simultaneously; formally, this is simply reflected by setting the parameter to be the sum of both values. As a basic example, the classical \textsc{Independent Set} problem on graphs is not fixed-parameter tractable w.r.t.\ the size $\ell$ of the sought set nor the maximum degree $d$ of the graph, but is well-known to admit a fixed-parameter algorithm w.r.t.\ $\ell+d$~\cite{CyganFKLMPPS15}.

\paragraph{Structural Parameters.}
In this work, we identify the exact boundaries of tractability for \problem\ in the context of fundamental graph parameters, as depicted in Figure~\ref{fig:pararel}.
For two of these parameters---notably treewidth~\cite{RobertsonS86} and treedepth~\cite{sparsity}---we do not need to provide explicit definitions since the respective lower bounds we obtain for \problem\ construct instances whose skeletons are well-known to have bounded treewidth and treedepth. The \emph{feedback edge number}~\cite{KoanaKNNZ21,FuchsleMNR22} is simply the minimum number of edges that need to be removed from an undirected graph in order to obtain a forest.
The term ``directed graph parameters'' here broadly refers to all parameters that achieve constant values on directed acyclic graphs; this includes directed treewidth~\cite{JohnsonRST01}, Kelly-width~\cite{hunter2008digraph}, and DAG-width~\cite{BerwangerDHKO12}, to name a few. 

The proofs of our main results for \problem\ (Theorem~\ref{thm:stcwfpt}) and \mproblem\ (Theorem~\ref{thm:msoacfpt}), as well as the lower bound in Theorem~\ref{W1-tcw}, will require a more in-depth introduction of treecut width and its slim variant.
A {\em treecut decomposition} of $G$ is a pair $(T,\mathcal{X})$ which consists of a rooted tree $T$ and a near-partition $\mathcal{X}=\{X_t\subseteq V(G): t\in V(T)\}$ of $V(G)$, where a near-partition is a partitioning of a set which can also contain the empty set. A set in the family $\mathcal{X}$ is called a {\em bag} of the treecut decomposition.

For any node $t$ of $T$ other than the root $r$, let $e(t)=ut$ be the unique edge incident to $t$ on the path to $r$. Let $T^u$ and $T^t$ be the two connected components in $T-e(t)$ which contain $u$ and $t$, respectively. Note that $(\bigcup_{q\in T^u} X_q, \bigcup_{q\in T^t} X_q)$ is a near-partition of $V(G)$, and we use $\cut(t)$ to denote the set of edges with one endpoint in each part. We define the {\em adhesion} of $t$ ($\adh_T(t)$ or $\adh(t)$ in brief) as $|\cut(t)|$; if $t$ is the root, we set $\adh_T(t)=0$ and $\cut(t)=\emptyset$.

The {\em torso} of a treecut decomposition $(T,\mathcal{X})$ at a node $t$, written as $H_t$, is the graph obtained from $G$ as follows. If $T$ consists of a single node $t$, then the torso of $(T,\mathcal{X})$ at $t$ is $G$. Otherwise, let $T_1, \ldots , T_{\ell}$ be the connected components of $T-t$. For each $i=1,\ldots , \ell$, the vertex set $Z_i\subseteq V(G)$ is defined as the set $\bigcup_{b\in V(T_i)}X_b$. The torso $H_t$ at $t$ is obtained from $G$ by {\em consolidating} each vertex set $Z_i$ into a single vertex $z_i$ (this is also called \emph{shrinking} in the literature). Here, the operation of consolidating a vertex set $Z$ into $z$ is to substitute $Z$ by $z$ in $G$, and for each edge $e$ between $Z$ and $v\in V(G)\setminus Z$, adding an edge $zv$ in the new graph. We note that this may create parallel edges.

The operation of {\em suppressing} a vertex $v$ of degree at most $2$ consists of deleting~$v$, and when the degree is two, adding an edge between the neighbors of $v$. Given a connected graph $G$ and  $X\subseteq V(G)$, let the {\em 3-center} of $(G,X)$ be the unique graph obtained from $G$ by exhaustively suppressing vertices in $V(G) \setminus X$ of degree at most two. Finally, for a node $t$ of $T$, we denote by $\tilde{H}_t$ the 3-center of $(H_t,X_t)$, where $H_t$ is the torso of $(T,\mathcal{X})$ at $t$. Let the \emph{torso-size} $\tor(t)$ denote $|\tilde{H}_t|$. 

\begin{definition}
The width of a treecut decomposition $(T,\mathcal{X})$ of $G$ is defined as $\max_{t\in V(T)}\{ \adh(t), \tor(t) \}$. The treecut width of $G$, or $\tcw(G)$ in short, is the minimum width of $(T,\mathcal{X})$ over all treecut decompositions $(T,\mathcal{X})$ of $G$.
\end{definition}

The graph parameter \emph{slim treecut width} can be defined analogously to treecut width, but with the distinction that suppressing only occurs for vertices of degree at most $1$. Yet, for our algorithmic applications, it will be more useful to use a different characterization of the parameter---one based on spanning trees and much better suited to the design of dynamic programming algorithms.

For a graph $G$ and a tree $T$ over $V(G)$, let the \emph{local feedback edge number} at $v\in V(G)$ be $E_{\loc}^{G,T}(v)=\{uw\in E(G)\setminus E(T)~|~\text{the path between $u$ and  $w$ in $T$ contains }v\}.$
The \emph{edge-cut width} of the pair $(G,T)$ is $\ecw(G,T)=1+\max_{v\in V} |E_{\loc}^{G,T}(v)|$.

\begin{lemma}[Prop.\ 27 and Thm.\ 30, \cite{GanianK22}]
\label{lem:computestcw}
Every graph $G$ with slim treecut width $k$ admits a spanning tree $T$ over some supergraph $G'$ of $G$ such that $(G',T)$ has edge-cut width at most $3(k+1)^2$. Moreover, such a pair $(G',T)$ can be computed in time $2^{k^{\bigoh(1)}}\cdot |V(G)|^4$.
\end{lemma}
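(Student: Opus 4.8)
The plan is to build the pair $(G',T)$ directly from an optimal slim treecut decomposition $(T_d,\mathcal{X})$ of $G$, which by assumption has width $k$. Two structural facts drive the whole argument. First, every bag is small: since the slim $3$-center $\tilde H_t$ retains all of $X_t$ and only suppresses vertices of degree at most $1$ outside $X_t$, we have $|X_t|\le |\tilde H_t|=\tor(t)\le k$. Second, every non-root node satisfies $\adh(t)=|\cut(t)|\le k$, so at most $k$ edges cross the cut associated with any decomposition edge. A subtler consequence of the torso-size bound is that $\tilde H_t$ keeps a consolidated component $z_i$ only if it attaches to $X_t$ with degree at least $2$; hence at most $k$ of the directions out of $X_t$ behave like genuine high-degree junctions, which will limit how much the spanning tree can branch at any single vertex.

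\textbf{Construction.} I root $T_d$ and process it bottom-up. For each bag $X_t$ I fix an arbitrary local tree $L_t$ spanning its at most $k$ vertices, adding any missing edges into the supergraph $G'$; since these added edges will be \emph{tree} edges of $T$, they never contribute to any $E_{\loc}^{G',T}(v)$. I then glue the bags together following the shape of $T_d$: for each non-root node $t$ with parent $u$, I promote one edge of $\cut(t)$ (adding one to $G'$ if $\cut(t)=\emptyset$) to a tree edge joining $L_t$ to the already-built part, attaching it at the endpoints of that chosen edge, and I leave the remaining at most $k-1$ edges of $\cut(t)$ as non-tree edges of $G'$. The result $T$ is a spanning tree of $G'=G\cup(\text{added tree edges})$.

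\textbf{Bounding the width.} Fix $v\in X_t$; I must count the non-tree edges whose fundamental $T$-path contains $v$, and I split them into three groups. (i) Edges internal to $X_t$: at most $\binom{k}{2}\le (k+1)^2$, since $|X_t|\le k$. (ii) Non-tree edges of $\cut(t)$ itself, whose $T$-paths all funnel through the attachment vertex of $t$: at most $k-1$, hence at most $(k+1)^2$. (iii) Non-tree cut edges of the children attached at $v$: each attached child contributes at most $\adh\le k$ edges through $v$, and only children whose consolidated component survives in the slim $3$-center attach as high-degree junctions at $v$, of which there are at most $k$ by the torso-size bound; this yields at most $k\cdot k$ edges. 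Summing gives $\max_v|E_{\loc}^{G',T}(v)|\le 3(k+1)^2-1$, hence $\ecw(G',T)\le 3(k+1)^2$. For the running time, computing an optimal (or suitably approximate) slim treecut decomposition of width $k$ is fixed-parameter tractable and dominates the cost, giving the claimed $2^{k^{\bigoh(1)}}\cdot|V(G)|^4$ bound; the conversion above is then linear in the size of the decomposition.

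The main obstacle is precisely the accounting in group (iii): ensuring that the non-tree cut edges coming from below do not accumulate on a single spanning-tree vertex beyond the $\bigoh(k^2)$ budget. The naive danger is a long spine in $T$ onto which the $\le k$ non-tree cut edges of every descendant subtree all route, which would blow the bound up to roughly $k$ times the depth. Overcoming this requires both the careful attachment rule—each child subtree is hung at the endpoint of its own chosen cut edge, so that its non-tree cut edges remain confined near that endpoint—and the degree bound supplied by the slim $3$-center: a vertex can act as the junction for at most $k$ genuinely branching subtrees, since otherwise it would witness more than $k$ surviving vertices in some $\tilde H_t$, contradicting $\tor(t)\le k$. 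Making this charging argument fully rigorous, and pinning down the exact constant, is the crux of the proof.
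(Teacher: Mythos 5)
The paper does not actually prove this lemma: it is imported wholesale from Ganian and Korchemna (2022, Prop.~27 and Thm.~30), so your attempt is a from-scratch reconstruction of a result whose real proof requires substantial machinery. Your reconstruction has a genuine gap, and it starts with the construction itself. An edge of $\cut(t)$ joins \emph{some} vertex of the subtree rooted at $t$ (not necessarily a vertex of $X_t$) to \emph{some} vertex outside that subtree (not necessarily in the parent's bag or in the ``already-built part''), so ``promoting one edge of $\cut(t)$ to a tree edge joining $L_t$ to the already-built part'' is not well-defined. Concretely, take $X_t=\{u\}$ with a child bag $\{a,b\}$ and edges $va$, $ua$, $ub$, $ab$ (where $v$ lies above $t$): then $\cut(t)=\{va\}$, which does not touch $u$ at all, so promoting it cannot attach $L_t$ to anything; worse, the same $G$-edge lies in the cuts of every node on a path of the decomposition tree, so per-node promotions can collide and the union of local trees and promoted edges need not be acyclic or even connected. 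Fixing this requires a coherence property (each subtree of the decomposition induces a \emph{connected} subtree of $T$, equivalently exactly one tree edge crosses each $\cut(t)$), which your construction neither states nor guarantees.

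This missing coherence property is precisely what breaks your width accounting: your three groups are not exhaustive. If a child subtree $\forgottenG_w$ ends up split into several $T$-components (which happens exactly when descendants' promoted edges exit it at different places), then non-tree edges with \emph{both} endpoints inside that single child subtree can have $T$-paths passing through $v$; these are in none of your groups (i)--(iii). Symmetrically, if two or more tree edges cross $\cut(t)$, then non-tree edges with both endpoints \emph{outside} the subtree of $t$ can have $T$-paths that dip through $v$; these are also uncounted, and their number is not controlled by $\adh(t)$ or $\tor(t)$ in any direct way. Your group (iii) also silently identifies ``subtrees hanging at $v$'' with ``children of $t$ in the decomposition,'' but with your attachment rule a grandchild or a far-away node can hang at $v$ whenever one of its cut edges is incident to $v$ and gets promoted. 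You do correctly isolate the two structural facts that make the slim variant work ($|X_t|\le\tor(t)\le k$, and every adhesion-$\ge 2$ child survives in the slim $3$-center, bounding their number by $k$), and your final paragraph candidly concedes that making the charging argument rigorous ``is the crux of the proof''---which is exactly right: what remains unproven is the heart of the statement, and it is what Prop.~27 of Ganian--Korchemna establishes through a considerably more careful construction. The claimed running time also rests on citing the FPT computation of a slim treecut decomposition (Thm.~30 of the same paper), so even the algorithmic half is a citation rather than an argument.
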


\begin{lemma}[Prop.\ 22, Prop.\ 26, and Thm.\ 30,~\cite{GanianK22}]
\label{lem:0tcw}
Every graph $G$ with maximum degree $d$ and treewidth $w$ admits a spanning tree $T$ over some supergraph $G'$ of $G$ such that both the edge-cut width of $(G',T)$ and the maximum degree of $T$ are upper-bounded by $\bigoh(d^2w^2)$. Moreover, such a pair $(G',T)$ can be computed in time $2^{(dw)^{\bigoh(1)}}\cdot |V(G)|^4$.
\end{lemma}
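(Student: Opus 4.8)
This statement is most naturally obtained by combining a purely structural bound with the spanning-tree characterization of slim treecut width already recorded in Lemma~\ref{lem:computestcw}. The plan is therefore to reduce the claim to proving that $\stcw(G)=\bigoh(dw)$, after which the $\ecw$ bound and the running time follow essentially for free, leaving only the degree bound on $T$ to be argued separately. As a starting point I would compute a tree decomposition $(\mathcal{T},\beta)$ of $G$ of width $w$ in time $2^{\bigoh(w)}\cdot|V(G)|$, which is absorbed into the claimed $2^{(dw)^{\bigoh(1)}}$ factor.

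The core step is to turn $(\mathcal{T},\beta)$ into a treecut decomposition of slim width $\bigoh(dw)$. Root $\mathcal{T}$ and, for each node $t$, place into its treecut bag exactly the vertices of $G$ that are forgotten in the subtree below $t$ but do not appear above it, so that these bags form a near-partition of $V(G)$. The key observation is that each bag $B_t$ of $(\mathcal{T},\beta)$ is a vertex separator of size at most $w+1$, and since $\Delta(G)\le d$, at most $(w+1)d=\bigoh(dw)$ edges of $G$ can cross the cut associated with any edge of $\mathcal{T}$; hence every adhesion of the resulting treecut decomposition has size $\bigoh(dw)$. The torso-size is bounded by the same quantity: a standard counting argument shows that at most $\bigoh(dw)$ of the consolidated children of any node survive the suppression process defining the $3$-center, so $\tor(t)=\bigoh(dw)$ for every $t$. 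Together these two bounds give $\stcw(G)=\bigoh(dw)$.

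With $\stcw(G)=\bigoh(dw)$ in hand, I would invoke the spanning-tree realization underlying Lemma~\ref{lem:computestcw}: it produces, in time $2^{(dw)^{\bigoh(1)}}\cdot|V(G)|^4$, a spanning tree $T$ of a supergraph $G'\supseteq G$ with $\ecw(G',T)\le 3(\bigoh(dw)+1)^2=\bigoh(d^2w^2)$. Crucially, the edges added to form $G'$ are used only as tree edges, so the non-tree edges of $G'$ are exactly $E(G)\setminus E(T)$ and the bound on $\ecw$ is unaffected. The quadratic blow-up here is precisely the $\ecw=\bigoh(\stcw^2)$ relationship, and it accounts simultaneously for the $\bigoh(d^2w^2)$ bound and for the claimed running time.

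The genuinely additional content, and the main obstacle, is the bound $\Delta(T)=\bigoh(d^2w^2)$ on the maximum degree of the spanning tree, which does not follow from the $\ecw$ bound alone: a careless realization can concentrate arbitrarily many adhesion and connector edges at a single vertex. This is where bounded degree must be used a second time. I would carry out the realization node-by-node over a ``cleaned'' decomposition in which every adhesion vertex and every surviving child genuinely carries a crossing edge of $G$, so that the number of tree-edges attached to any vertex at a given level is $\bigoh(dw)$ and each vertex participates in only $\bigoh(dw)$ levels, yielding $\Delta(T)=\bigoh(d^2w^2)$ as well. Making the degree bound and the locality needed for the $\ecw$ bound hold simultaneously, without exceeding the $|V(G)|^4$ time budget, is the delicate part of the argument.
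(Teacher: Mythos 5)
You should first be aware of a point of calibration: the paper never proves Lemma~\ref{lem:0tcw} at all. It is imported verbatim from \citeauthor{GanianK22} (\citeyear{GanianK22}) (Propositions~22 and~26 and Theorem~30 of that work), so there is no in-paper argument to compare yours against; what you have written is a reconstruction of a cited black-box result. That said, the skeleton of your reconstruction is sound and composes the available ingredients the intended way: the forget-based conversion of a width-$w$ tree decomposition into a treecut decomposition, the observation that every edge leaving a subtree must land on a bag vertex (so all adhesions are at most $(w+1)d$ and at most $(w+1)d$ consolidated children retain positive degree), hence $\stcw(G)=\bigoh(dw)$, and then Lemma~\ref{lem:computestcw} to get $(G',T)$ with $\ecw(G',T)\leq 3(\bigoh(dw)+1)^2=\bigoh(d^2w^2)$ within the stated time budget. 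Two minor blemishes: for \emph{slim} treecut width the torso is obtained by suppressing vertices of degree at most $1$, not the $3$-center (your counting survives, since already the number of children with adhesion at least $1$ is at most $(w+1)d$, but the argument should be phrased against the right definition); and your ``crucial'' claim that the edges added to form $G'$ are used only as tree edges is both unsupported by the statement of Lemma~\ref{lem:computestcw} and unnecessary, since $\ecw(G',T)$ by definition already accounts for all non-tree edges of $G'$.

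The genuine gap is the degree bound on $T$, which you correctly flag as not following from the $\ecw$ bound alone, but then only gesture at (``cleaned decomposition'', ``levels''); as written, this step cannot be checked. It is also more machinery than is needed, because the bound follows from the $\ecw$ bound \emph{together with} the degree bound on $G$ by a short direct argument. Assume $G$ is connected (otherwise work per component). Fix a vertex $v$ and let $T_1,\dots,T_q$ be the components of $T-v$, so $q$ is the degree of $v$ in $T$. Each $V(T_i)$ is a nonempty proper subset of $V(G)$, so some edge of $G$ leaves $V(T_i)$; moreover, the only tree edge leaving $V(T_i)$ is its attachment edge to $v$. If that attachment edge belongs to $G$, there are at most $d$ such subtrees. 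Otherwise, the $G$-edge leaving $V(T_i)$ is a non-tree edge of $(G',T)$ whose tree path passes through $v$, i.e., it lies in $E^{G',T}_{\loc}(v)$, and any such edge can be charged by at most two subtrees; hence there are at most $2\,\ecw(G',T)$ subtrees of this second kind. Altogether $\Delta(T)\leq d+2\,\ecw(G',T)=\bigoh(d^2w^2)$, so the degree bound holds automatically for \emph{any} output of Lemma~\ref{lem:computestcw}, with no modification of its internals and no extra running time.
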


We refer to the work that introduced the notion~\cite{GanianK22} for a broader discussion of slim treecut width, as well as illustrative figures.

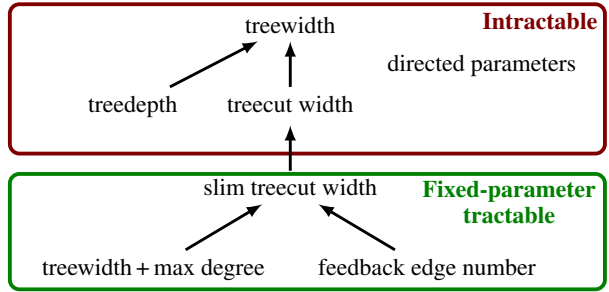
\begin{figure}
\vspace{0.3cm}
\centering
\scalebox{0.9}{
\begin{tikzpicture}[scale = 1]

\draw[rounded corners,red!50!black,ultra thick] (-0.1, -1.9) rectangle (8.6, 0.3) {};
\draw[rounded corners,green!50!black,ultra thick] (-0.1, -3.9) rectangle (8.6, -2.2) {};

\node (TW) at (4, 0) {treewidth};
\node (TD) at (1.7, -1.2)  {treedepth};
\node (TCW) at (4, -1.2)  {\phantom{p}treecut width\phantom{p}};
\node (DP) at (6.8, -0.6)  {directed parameters};

\node (STCW) at (4, -2.4)  {slim treecut width};
\node (TWD) at (2, -3.6)  {treewidth\,+\,max degree};
\node (FEN) at (6, -3.6)  {feedback edge number};

\draw[-latex,very thick] (TD) -- (TW);
\draw[-latex,very thick] (TCW) -- (TW);
\draw[-latex,very thick] (STCW) -- (TCW);
\draw[-latex,very thick] (TWD) -- (STCW);
\draw[-latex,very thick] (FEN) -- (STCW);

\node (fpt) at (7.2, -2.45)  {\color{green!50!black} \bf Fixed-parameter};
\node (fpt) at (7.2, -2.8)  {\color{green!50!black} \bf tractable};
\node (fpt) at (7.65, 0.05)  {\color{red!50!black} \bf Intractable};

 \end{tikzpicture}}
 \caption{A pictorial view of the tractability of \problem when capacities are bounded. 
An arc from a parameter $x$ to a parameter $y$ indicates that $x$ is dominated by $y$, i.e., a bound on $x$ implies a bound on $y$ but the opposite does not hold.}
 \label{fig:pararel}
\end{figure}

\section{The Surprising Difficulty of Solving Atomic Congestion Games}
\label{sec:nph}
Before initializing our more fine-grained parameterized analysis of \problem, we first 
consider the computational complexity of the problem from the classical point of view, that is, with respect to \NP-hardness. In fact, we show that even surprisingly simple input instances turn out to be intractable.

To begin, we show that \problem\ is \NP-hard even when the underlying undirected graph of the input digraph is restricted to the class $K_{2,N}$. We prove this result (as well as Theorem~\ref{hard-tw-delta} later on) via a reduction from the following variant of the classical \textsc{Subset Sum} problem over $d$-dimensional vectors: 

\defproblem{Multidimensional Uniform 0/1 Knapsack (\textsc{MUKS})}
{A set $S = \{\vec{v}_1,\ldots,\vec{v}_n\} \subseteq \N^d$ of $d$-dimensional vectors containing natural numbers, a positive integer $k$, and a target vector $\vec T\in\N^d$.}
{Is there a subset $S' \subseteq S$ of at least $k$ vectors such that $\sum_{\vec s \in S'} \vec s \leq \vec T$?}

\begin{lemma}
\label{lem:muks}
\textsc{MUKS} is \NP-hard and also \W\textup{[1]}-hard parameterized by $d$, even if all numbers are encoded in unary.
\end{lemma}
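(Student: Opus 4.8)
The plan is to reduce from \textsc{Multicolored Clique} (also called \textsc{Partitioned Clique}), which is well known to be \NP-hard and \W[1]-hard when parameterized by the number $k$ of color classes. Given a $k$-partite graph $G$ with parts $V_1,\dots,V_k$ and a fixed bijection $\mathrm{id}\colon V(G)\to[n]$ assigning each vertex a ``name'' in $[n]$, I would build an \textsc{MUKS} instance whose dimension $d$ is a function of $k$ alone (concretely $d = k + 5\binom{k}{2} = \bigoh(k^2)$), all of whose entries lie in $[0,n]$ (so the unary encoding stays polynomial in the input size), and whose target count is $k' = k + \binom{k}{2}$. Since $d$ depends only on $k$ and the construction is clearly polynomial, establishing the equivalence of the two instances yields both \NP-hardness and \W[1]-hardness parameterized by $d$ simultaneously.

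The set $S$ would contain one \emph{vertex-vector} $x_v$ for every $v\in V(G)$ and one \emph{edge-vector} $y_e$ for every $e\in E(G)$; the intended solution selects one vertex per part and one edge per pair of parts. The coordinates of $\vec T$ fall into two groups. First, for each part $V_i$ a \emph{count} coordinate in which every vertex of $V_i$ contributes $1$ with target $1$, and for each pair $\{i,j\}$ an analogous count coordinate for the edges between $V_i$ and $V_j$; together with the requirement of selecting at least $k'$ vectors, these $k+\binom{k}{2}$ coordinates force the selection to consist of \emph{exactly} one vertex per part and exactly one edge per pair. Second, for each of the two incidences of a part $V_i$ with a pair $\{i,j\}$, two \emph{consistency} coordinates implement the equality ``the selected vertex of $V_i$ equals the $V_i$-endpoint of the selected $\{i,j\}$-edge'' via the standard two-inequality gadget: in one coordinate the vertex $v\in V_i$ contributes $\mathrm{id}(v)$ and the edge $e$ with $V_i$-endpoint $u$ contributes $n-\mathrm{id}(u)$ (target $n$, forcing $\mathrm{id}(v)\le \mathrm{id}(u)$), and in the mirror coordinate the roles of $\mathrm{id}$ and $n-\mathrm{id}$ are swapped (forcing $\mathrm{id}(u)\le\mathrm{id}(v)$).

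The correctness argument then runs as follows. Because the count coordinates pin down exactly one selected vertex per part and one selected edge per pair, every consistency coordinate receives contributions from precisely two selected vectors, so its two inequalities jointly enforce $\mathrm{id}(v)=\mathrm{id}(u)$; hence each selected edge of pair $\{i,j\}$ must genuinely join the selected vertices of $V_i$ and $V_j$. Applying this over all pairs shows the selected vertices form a multicolored clique, and conversely any multicolored clique yields a feasible selection by taking its vertices together with its $\binom{k}{2}$ clique edges. The main obstacle, and the only delicate point, is encoding the adjacency requirement---an equality between vertex names---using solely the coordinate-wise ``$\le\vec T$'' packing constraints while keeping all numbers bounded by $n$ (so that they remain small under unary encoding) and the dimension bounded by $\bigoh(k^2)$; the edge-selection layer combined with the two-inequality equality gadget is exactly what resolves this.
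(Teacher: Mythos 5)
Your proposal is correct, but it takes a genuinely different route from the paper. The paper's proof is a two-line complementation argument: it reduces from \textsc{Multidimensional Relaxed Subset Sum} (MRSS), which is already known to be \NP-hard under unary encodings and \W[1]-hard parameterized by $d$, by mapping an MRSS instance $(S,\vec T,k)$ to the MUKS instance $(S,\vec\Sigma-\vec T,n-k)$ where $\vec\Sigma=\sum_{s\in S}s$; complementing the chosen subset swaps ``at most $k$ vectors summing to at least $\vec T$'' into ``at least $n-k$ vectors summing to at most $\vec\Sigma-\vec T$'', and $d$ is preserved exactly. You instead reduce directly from \textsc{Multicolored Clique}: count coordinates plus the cardinality threshold $k'=k+\binom{k}{2}$ force exactly one vertex per part and one edge per pair, and the paired coordinates $\mathrm{id}(v)+(n-\mathrm{id}(u))\le n$ and $(n-\mathrm{id}(v))+\mathrm{id}(u)\le n$ correctly enforce $\mathrm{id}(v)=\mathrm{id}(u)$, which by bijectivity of $\mathrm{id}$ yields the incidence constraints; since all entries lie in $[0,n]$ and $d=\bigoh(k^2)$, both the strong \NP-hardness and the \W[1]-hardness in $d$ follow from a single polynomial-time parameterized reduction. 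What each approach buys: the paper's argument is far shorter and keeps $d$ unchanged, but it leans on a prior, somewhat less standard hardness result for MRSS; yours is longer but self-contained, resting only on the textbook \W[1]-hardness of \textsc{Multicolored Clique}, at the cost of a (harmless) quadratic blow-up in the parameter and the need to verify the gadget arithmetic--which you do correctly, including the crucial observation that the count coordinates guarantee each consistency coordinate sees contributions from exactly two selected vectors, without which the equality gadget would be vacuous.
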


\begin{proof}
The reduction proceeds from a problem called \textsc{Multidimensional Relaxed Subset Sum} (MRSS),
which is \NP-hard even if the inputs are encoded in unary~\cite{GanianOS17}.
In its definition, it subtly differs from MUKS above. Namely, the direction of the inequalities are swapped, that is, one seeks a subset of \emph{at most} $k$ vectors that sums up to \emph{at least} $\vec{T}$.
Now, given an instance $(S, \vec{T}, k)$ of MRSS, let $\vec{\Sigma} := \sum_{s \in S} s$ be the total sum over all vectors.
The instance of MUKS then consists of $(S, \vec{\Sigma} - \vec{T}, n - k)$.
We claim that the MUKS instance has a solution of size at least $n-k$ if and only if the original MRSS-instance has a solution of size at most $k$.
Indeed, for $S' \subseteq S$, $\sum_{s \in S'} s \geq \vec{T}$ is equivalent to $\vec{\Sigma} - \sum_{s \in S'} s \leq \vec{\Sigma} - \vec{T}$, and, by definition, the left-hand side equals $\sum_{s \in S''} s$, where $S'' = S - S'$ is a set of at least $n-k$ vectors. This proves the correctness of the reduction, and hence, the \NP-hardness on unary inputs.

As for the \W[1]-hardness of the problem, we note that MRSS is also known to be \W[1]-hard when parameterized by $d$~\cite{GanianOS17}. Our polynomial-time reduction does not change the value of $d$, and hence, it immediately also establishes the \W[1]-hardness of MUKS w.r.t.\ $d$.
\end{proof}

Before proceeding to the reduction, let us briefly remark on the significance of the result in the context of this article. Essentially, establishing \NP-hardness on $K_{2,N}$ rules out not only fixed-parameter tractability under almost all commonly considered graph parameterizations (including not only the classical \emph{treewidth}~\cite{RobertsonS86}, but also the \emph{treedepth}~\cite{sparsity}, the \emph{vertex cover number}~\cite{KorhonenP15}, and the more recently introduced edge-cut variants of treewidth like the \emph{treecut width}~\cite{MarxW14}), but even polynomial-time algorithms for instances where such parameters are bounded by a constant.

\begin{theorem}
\label{hard-vc}
\problem\ is \NP-hard even when restricted to networks whose skeletons belong to the class $K_{2,N}$.
\end{theorem}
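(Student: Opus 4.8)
The plan is to reduce from \textsc{MUKS} (Lemma~\ref{lem:muks}), whose \NP-hardness on \emph{unary} inputs guarantees that the number of agents I build stays polynomial. The key enabling observation is that, since the objective contributes $f(e)\cdot \ell_e(f(e))$ on each arc and $\ell_e$ may be an arbitrary function into $\mathbb{R}_{\geq 0}\cup\{\infty\}$, I can realize essentially any cost-as-a-function-of-load $g_e$ on each arc subject only to $g_e(0)=0$: for a positive load $z$ I set $\ell_e(z)=g_e(z)/z$, and I encode a capacity by letting $\ell_e(z)=\infty$ once $z$ exceeds the desired bound. This lets me treat each arc as a freely programmable ``bin'' whose cost depends only on how many agents cross it.

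Given a \textsc{MUKS} instance $(S=\{\vec v_1,\dots,\vec v_n\},k,\vec T)$ with $\vec v_i\in\N^d$, I build a digraph whose skeleton is $K_{2,n+d}$: the two-side consists of hubs $a,b$, and the $(n+d)$-side consists of one source-leaf $S_i$ per vector and one dimension-leaf $D_j$ per coordinate. I orient the arcs so that the graph is layered (hence acyclic): $S_i\to a$ and $S_i\to b$ for every $i$, and $a\to D_j$ and $b\to D_j$ for every $j$. For each pair $(i,j)$ I introduce $v_i[j]$ parallel agents with source $S_i$ and sink $D_j$; the total agent count is $\sum_{i,j}v_i[j]$, polynomial under the unary encoding. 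The only freedom each agent has is its choice of hub, so the unique paths are $S_i\to a\to D_j$ (read as ``vector $i$ accepted'') and $S_i\to b\to D_j$ (``rejected''). I make the accept-arc of dimension $j$, namely $a\to D_j$, have zero cost up to capacity $T[j]$ and cost $\infty$ above it; the reject-arcs $b\to D_j$ are free and uncapacitated; each per-vector reject-arc $S_i\to b$ gets a flat cost $\rho$ for every positive load (i.e.\ $\ell_{S_ib}(z)=\rho/z$), and the arcs $S_i\to a$ are free. I set $\rho=1$ and ask whether a flow of cost at most $\lambda=n-k$ exists.

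Correctness hinges on two points, and here lies the main obstacle. First, I must force an all-or-nothing selection per vector even though the agents of a vector choose hubs independently; this is exactly what the flat per-vector reject cost achieves, since a vector that is even partially rejected already pays the full $\rho$ on $S_i\to b$ while additionally consuming scarce capacity on the accept-arcs, so fully rejecting dominates any partial rejection and fully accepting (when feasible) dominates everything. I would therefore argue that any optimal flow can be massaged, vector by vector, into one in which each vector is either fully accepted or fully rejected, without increasing cost or violating feasibility. Second---and this is the structural crux, given that only two hubs are available to encode $d$ independent capacity constraints---the accept-arc $a\to D_j$ is shared by the dimension-$j$ agents of \emph{all} vectors, so its load equals $\sum_{i\in S'}v_i[j]$ for the accepted set $S'$, and its capacity $T[j]$ encodes precisely the $j$-th inequality of \textsc{MUKS}. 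With both points in hand, the minimum cost equals $\rho\,(n-\max|S'|)$ over all feasible selections $S'$, so a flow of cost at most $n-k$ exists iff some feasible $S'$ has size at least $k$, completing the reduction. Routing the agents as leaf-to-hub-to-leaf traffic (rather than as interchangeable hub-to-hub traffic, which would collapse into a polynomially solvable identical-agent instance) is what keeps the agents distinguishable and ties each dimension to a single shared arc; getting this orientation right so that no spurious detours exist is the part I would be most careful about.
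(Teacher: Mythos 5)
Your proposal is correct and is essentially the paper's own reduction: up to renaming, your hubs $a,b$ are the paper's $h_0,h_1$, with the same latencies (free accept arcs, capacity $T_j$ on the shared dimension arcs, flat cost $1/x$ on the per-vector reject arcs), the same agent multiplicities $v_{i,j}$, the same threshold $\lambda=n-k$, and the same exchange argument forcing all-or-nothing routing per vector. Nothing further is needed.
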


\begin{proof}
We provide a polynomial-time reduction from $\textsc{MUKS}$ where numbers are encoded in unary to \problem. Recall that $\textsc{MUKS}$ is \NP-hard by Lemma~\ref{lem:muks}. The reduction is as follows.
For every input $\vec v_i \in S$ in the MUKS instance, there is a source vertex $s_i$.
For each of the $d$ entries $T_1,\ldots,T_d$ in the target vector $\vec{T}$, there is a target vertex $t_j$.
There are also two vertices $h_0$ and $h_1$.
Each of the $n$ source vertices $s_i$ is connected by outgoing arcs to both $h_0$ and $h_1$.
The latencies are defined as $\ell_{s_i,h_0} = 0$ for all arcs $(s_i,h_0)$, and as $\ell_{s_i,h_1}(x) = 1/x$ for all arcs $(s_i, h_1)$.
Both $h_0$ and $h_1$ have outgoing arcs to all the target vertices $t_j$. For arcs $(h_0,t_j)$,
$\ell_{h_0,t_j}(x) = 0$ if $x \leq T_j$, and otherwise, $\ell_{h_0,t_j}(x) = \infty$.
That is, the capacity of the arc $(h_0, t_j)$ is equal to $T_j$, for each $j\in [d]$.
For arcs $(h_1,t_j)$, $\ell_{h_1,t_j}(x) = 0$.

For the $i$-th input vector $\vec{v}_i$, let $v_{i,j}$ be its $j$-th entry. 
For all $i\in [n]$ and $j\in [d]$, 
we want to route $v_{i,j}$ agents from $s_i$ to $t_j$.
That is, there are $v_{i,j}$ copies of the pair $(s_i, t_j)$ for all $i\in [n]$ and $j\in [d]$. 

We claim that there is a solution of cost at most $n-k$ if and only if the original instance of \textsc{MUKS} has a solution comprised of at least $k$ vectors. First, any solution $S'\subset S$ for \textsc{MUKS} translates immediately into a solution of cost at most $n-k$:
the choice of latency $\ell_e(x) = 1/x$ ensures that $f(e) \ell_e(f(e)) = x \cdot 1/x = 1$, where $f(e) = x$. 
Hence, the cost of a solution is just $n-g$, where $g$ is the number of \emph{distinct origins} from which \emph{all} agents are routed over $h_0$ (since every origin for which at least one agent is routed via $h_1$ adds $1$ to the cost).
Therefore, by construction, routing all agents from every $s_i$ via $h_0$ whenever $\vec v_i \in S'$ transforms $S'$ into a solution of cost at most $n-k$; in particular, the capacities of the edges from $h_0$ are respected.

On the other hand, suppose we are given a set of strategies of total cost at most $n-k$. 
This means that there are precisely $v_{i,j}$ agents routed from $s_i$ to $t_j$ for all $i\in [n]$ and $j\in [d]$.
Again, by the property of the latency function $1/x$, there is a solution to the instance of \problem of equal cost such that either none or all of the agents from a single origin are routed via $h_1$. For, suppose no agents are routed via $h_1$, then there is nothing to show.
Otherwise, if at least one agent is routed via $h_1$, then the cost added by this is already $1$, and routing all the remaining agents that originate in the same origin via $h_1$ will not add costs (and can only ever relax the capacity constraints). 
Hence, we can assume that all the agents from the same origin travel along the same vertex $h_i$ for $i \in \{0,1\}$ to their respective destinations in the solution. But, this means that there are at least $k$ distinct origins that can be routed via $h_0$ and respect the capacity constraints, and hence, form a valid solution for the \textsc{MUKS} instance.

To complete the proof, note that the skeleton of the digraph in the constructed instance of \problem{} above is $K_{2,d+n}$.
\end{proof}

Since Theorem~\ref{hard-vc} essentially rules out fixed-parameter algorithms based on structural network parameters alone, we turn our attention to parameterizing by the maximum capacity $\cmax$. As our first result in this direction, we show that \problem\ remains \NP-hard on networks of constant treewidth, even if the maximum capacity of a link is just $1$.
This is done via a reduction from the following classical graph problem.

\defproblem{Edge-Disjoint Paths (\textsc{EDP})}
{A graph $G$ and a set $P$ of terminal pairs, that is, a set of subsets of $V(G)$ of size two.}
{Is there a set of pairwise edge-disjoint paths connecting each set of terminal pairs in $P$?}

\begin{theorem}
\label{hard-td}
\problem\ is \NP-hard even when restricted to networks with $\cmax=1$ and whose skeletons have treewidth or treedepth at most $15$.
\end{theorem}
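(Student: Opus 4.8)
The plan is to exploit the fact that with $\cmax=1$ the \problem\ objective exactly encodes arc-disjointness. Concretely, I would make every latency function take the value $\ell_e(1)=0$ and $\ell_e(x)=\infty$ for all $x\ge 2$, so that the network has $\cmax=1$ and a flow assignment has finite cost---necessarily $0$---if and only if the chosen directed paths are pairwise arc-disjoint. Setting the threshold $\lambda=0$, the \problem\ instance is then a yes-instance precisely when all agents admit an arc-disjoint routing. The reduction will be from \textsc{EDP} restricted to inputs of bounded treedepth, for which \NP-hardness is known; note that bounded treedepth is what makes the treedepth (and hence treewidth) conclusion possible, since series-parallel hardness alone would only control treewidth.

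The one genuine modeling issue is that \textsc{EDP} is undirected while \problem\ lives on a digraph, so I would replace each undirected edge $\{u,v\}$ of the input graph $G$ by a small \emph{bidirectional edge gadget}: two fresh vertices $a_{uv},b_{uv}$ together with the arcs $u\to a_{uv}$, $v\to a_{uv}$, $a_{uv}\to b_{uv}$, $b_{uv}\to u$, and $b_{uv}\to v$. The central arc $a_{uv}\to b_{uv}$, having capacity $1$, carries at most one agent, so the whole gadget is usable by at most one agent, in exactly one of the two directions $u\rightsquigarrow v$ or $v\rightsquigarrow u$; this mirrors precisely the capacity-$1$ behaviour of the undirected edge $\{u,v\}$. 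Each terminal pair $\{x,y\}\in P$ then becomes a single agent with source $x$ and destination $y$, the orientation being irrelevant since the gadget is traversable both ways.

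For correctness I would argue both directions. In the forward direction, a family of edge-disjoint paths in $G$ lifts arc by arc: a traversal of $\{u,v\}$ from $u$ to $v$ becomes $u\to a_{uv}\to b_{uv}\to v$, and edge-disjointness guarantees that each gadget, hence each arc of the digraph, is used at most once, giving cost $0$. Conversely, any arc-disjoint routing uses each central arc at most once; since a simple directed path entering a gadget at $u$ cannot return to $u$, every gadget crossing projects to a traversal of the corresponding edge of $G$, and distinctness of the central arcs yields edge-disjoint paths connecting the terminals. Thus the two instances are equivalent.

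The remaining, and main, technical point is to verify that the gadget keeps both width measures bounded, and to track the resulting constant. I would bound the treedepth of the skeleton $G'$ of the constructed digraph by $\mathrm{td}(G)+2$: fix an optimal elimination forest of $G$, and for each edge $\{u,v\}$---whose endpoints are necessarily in an ancestor--descendant relation in the forest---append the chain $a_{uv},b_{uv}$ below the deeper endpoint. A direct check shows that every edge of $G'$ then joins an ancestor--descendant pair, so the depth grows by at most $2$. Starting from \textsc{EDP} instances of appropriately bounded treedepth, this gives treedepth at most $15$ for $G'$, and since treewidth is always at most treedepth minus one, the treewidth bound of $15$ follows as well. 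The only delicate part is matching the additive $+2$ from the gadget against the constant supplied by the source hardness result for \textsc{EDP} so that the final value lands within the stated bound.
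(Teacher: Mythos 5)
Your core reduction coincides with the paper's: it too reduces from \textsc{EDP}, replaces each undirected edge by a directed gadget whose single bottleneck arc enforces that at most one agent crosses in exactly one of the two directions, gives every arc latency $0$ up to capacity $1$ and $\infty$ beyond, and asks for cost $\lambda=0$; your five-arc gadget is just a leaner variant of the paper's gadget (two internal vertices instead of four), and your two-way correctness argument is sound. The genuine differences sit exactly where you flagged them as delicate. First, the hardness source: the paper does not invoke a generic ``\textsc{EDP} on bounded-treedepth inputs'' statement, but the concrete theorem that \textsc{EDP} is \NP-hard on the class $K_{3,N}$ (Fleszar, Mnich, and Spoerhase), and this is the citation your argument needs to be complete---as written, your proof rests on an unnamed result and an unverified constant. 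Fortunately the gap closes with room to spare: $K_{3,n}$ has treedepth at most $4$ (eliminate the three high-degree vertices, leaving an independent set), so your additive $+2$ gives treedepth at most $6$ and hence treewidth at most $5$, well under $15$. Second, the width analysis itself is genuinely different: the paper never touches elimination forests, instead observing that deleting the three high-degree vertices of the gadget-expanded graph leaves components of at most $13$ vertices each, and any graph admitting such a deletion set has treewidth and treedepth at most $3+13-1=15$. Your elimination-forest argument (hanging the chain $a_{uv},b_{uv}$ below the deeper endpoint of $\{u,v\}$) is correct and buys generality, showing that the gadget substitution inflates treedepth by at most $2$ on \emph{any} host graph, whereas the paper's counting argument buys simplicity but is tied to the $K_{3,N}$ structure. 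In short: same reduction, a cleaner and more general width bound on your side, but the theorem as stated only follows once you pin the source hardness to \textsc{EDP} on $K_{3,N}$.
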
 

\begin{proof}
It is known that \textsc{EDP} is \NP-hard when restricted to the class $K_{3,N}$~\cite{FleszarMS18}.
The reduction takes an instance $\mathcal{I}$ of \textsc{EDP} on $K_{3,N}$ and replaces each edge $uv$ in the $K_{3,n}$ from $\mathcal{I}$ with the gadget depicted in Figure~\ref{fig:td-1-in-3} (left), obtaining a digraph $G'$.
We now complete the construction of the instance $\mathcal{I}'$ obtained from $\mathcal{I}$. First, the latency functions are set so that the capacity of each arc in $G'$ is set to one, and a flow of at most one costs nothing, that is, for each $e\in E(G')$ and $x\in \mathbb{N}$,
\[
\ell_e(x) = 
\begin{cases} 
	0 $ if $ x \leq 1 \\
	\infty $ otherwise$.
\end{cases}
\]
Then, for each terminal pair $(s,t)$ in $P$ in $\mathcal{I}$, we want to route $1$ agent from $s$ to $t$ in $G'$.
This completes the construction of $\mathcal{I}'$.
By construction, given two vertices $u,v\in V(G')$ connected by an arc gadget, it is only possible to route at most one agent from $u$ to $v$ or from $v$ to $u$ via this arc gadget.
Due to the capacity of each arc in $G'$ being $1$, we then obtain the desired equivalence.
That is, there is a solution to $\mathcal{I}$ if and only if there is a solution of cost $0$ for $\mathcal{I}'$.

To complete the proof, we observe that the obtained network contains a bounded-sized set $X$ of vertices (in our case, the part of size $3$ in the $K_{3,n}$) such that deleting $X$ decomposes the rest of the network into connected components each containing at most $y$ vertices (in our case $y=13$, as each such component consists of a vertex connected to $3$ copies of the gadget depicted in Figure~\ref{fig:td-1-in-3} (left). It is well-known (and also easy to see from their formal definitions~\cite{RobertsonS86,sparsity,CyganFKLMPPS15}) that such graphs have treewidth and treedepth upper-bounded by $|X|+y-1$ (i.e., $15$).
\end{proof}

\begin{figure}
\centering
\scalebox{0.87}{
\begin{tikzpicture}[scale=0.9]
\tikzstyle{every node}=[draw, shape=circle, minimum size=3pt,inner sep=0pt, fill=black]
\draw (0,0) node[label=left:$u$] (u){};
\draw (1.1,0) node (u'){}; 
\draw (2.2,1) node (b){}; 
\draw (2.2,-1) node (c){};
\draw (3.3,0) node (v'){}; 
\draw (4.4,0) node [label=right:$v$] (v){};
\draw[latex-latex] (u)--(u');
\draw[latex-latex] (v)--(v');
\draw[-latex] (u')--(b);
\draw[-latex] (v')--(b);
\draw[-latex] (c)--(v');
\draw[-latex] (c)--(u');
\draw[-latex] (b)--(c);
\draw (4.4,-1.9) node [color=white] (v){};
\end{tikzpicture}
}
\hfill
\vline
\hfill
\scalebox{0.8}{
\begin{tikzpicture}[scale=0.9]
\tikzstyle{every node}=[draw, shape=circle, minimum size=3pt,inner sep=0pt, fill=black]
\draw (-1,1.75) node[label=left:$x_1$] (x1){};
\draw (0,2.0) node (x1t){};
\draw (0,1.5) node (x1f){};
\draw[-latex] (x1) -- (x1t);
\draw[-latex] (x1) -- (x1f);

\draw (-1,0.25) node[label=left:$x_i$] (xi){};
\draw (0,0.5) node (xit){};
\draw (0,0) node (xif){};
\draw[-latex] (xi) -- (xit);
\draw[-latex] (xi) -- (xif);

\draw (-1,-1.25) node[label=left:$x_n$] (xn){};
\draw (0,-1) node (xnt){};
\draw (0,-1.5) node (xnf){};
\draw[-latex] (xn) -- (xnf);
\draw[-latex] (xn) -- (xnt);

\path (x1f) -- node[draw=none,fill=white]{\rvdots} (xit);
\path (xif) -- node[draw=none,fill=white]{\rvdots} (xnt);

\draw (2,2.3) node (c1t){};
\draw (2,1.8) node (c1f){};
\draw (2,0.5) node(cjt){};
\draw (2,0) node (cjf){};
\draw (2,0.5) node(cjt){};
\draw (2,-1.3) node (cmt){};
\draw (2,-1.8) node (cmf){};
\path (c1f) -- node[draw=none,fill=white]{\rvdots} (cjt);
\path (cjf) -- node[draw=none,fill=white]{\rvdots} (cmt);

\draw (3,2.05) node[label=right:$c_1$] (c1){};
\draw (3,0.25) node[label=right:$c_j$] (cj){};
\draw (3,-1.55) node[label=right:$c_m$] (cm){};

\draw[-latex] (cmf) -- (cm);
\draw[-latex] (cmt) -- (cm);
\draw[-latex] (cjf) -- (cj);
\draw[-latex] (cjt) -- (cj);
\draw[-latex] (c1f) -- (c1);
\draw[-latex] (c1t) -- (c1);

\draw[-latex] (x1t) -- (c1t);
\draw[-latex] (x1f) -- (c1f);
\draw[-latex] (x1t) -- (cjt);
\draw[-latex] (x1f) -- (cjf);
\draw[-latex] (x1t) -- (cmt);
\draw[-latex] (x1f) -- (cmf);

\draw[-latex] (xnt) -- (c1t);
\draw[-latex] (xnf) -- (c1f);
\draw[-latex] (xnt) -- (cjt);
\draw[-latex] (xnf) -- (cjf);
\draw[-latex] (xnt) -- (cmt);
\draw[-latex] (xnf) -- (cmf);

\draw[-latex] (xit) -- (c1t);
\draw[-latex] (xif) -- (c1f);
\draw[-latex] (xit) -- (cjt);
\draw[-latex] (xif) -- (cjf);
\draw[-latex] (xit) -- (cmt);
\draw[-latex] (xif) -- (cmf);
\end{tikzpicture}
}
\caption{The arc gadget replacing each undirected edge $uv$ in the reductions from the proofs of Thms.~\ref{hard-td} and~\ref{W1-tcw} (left), and the digraph constructed in the proof of Thm.~\ref{hard-dags} (right).}
\label{fig:td-1-in-3}
\end{figure}
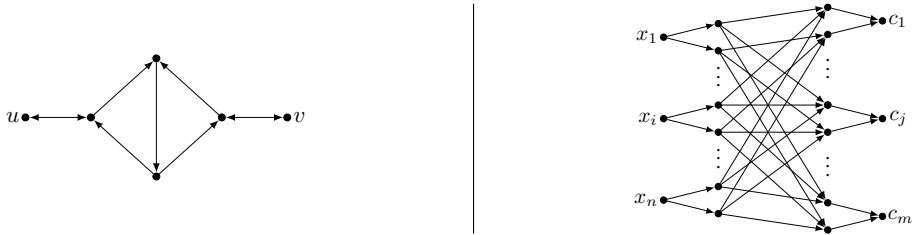

Theorem~\ref{hard-td} rules out using classic vertex-separator based parameters like treewidth, pathwidth or treedepth to solve SOAC, even when $\cmax=1$. All of these parameters are tied to the skeleton of the network and do not account for the orientations of the arcs. One could wonder whether structural parameters introduced specifically for directed graphs would be better suited for the task. The unifying feature of the most widely studied of these parameters~\cite{JohnsonRST01,hunter2008digraph,BerwangerDHKO12} is that they achieve constant values---typically $1$---on directed acyclic graphs (DAGs). Below, we show that bounded-capacity \problem\ is \NP-hard even on simple DAGs, ruling out the use of these directed network parameters.

\begin{theorem}
\label{hard-dags}
\problem\ is \NP-hard even when restricted to networks which are DAGs, have $\cmax=3$, and whose skeletons have maximum degree~$4$.
\end{theorem}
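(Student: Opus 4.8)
The plan is to reduce from a bounded-occurrence variant of \textsc{Positive 1-in-3 SAT} --- the problem of deciding, given a formula with exactly three positive literals per clause, whether some truth assignment makes \emph{exactly one} literal per clause true --- restricted to instances in which every variable occurs in at most three clauses. This restriction is what will ultimately pin the maximum skeleton degree to $4$, and I would invoke (or, if needed, establish via standard occurrence-reduction gadgets) that it remains \NP-hard. Given such a formula with variables $x_1,\dots,x_n$ and clauses $c_1,\dots,c_m$, I build the digraph $G'$ sketched in Figure~\ref{fig:td-1-in-3} (right): for each variable $x_i$ a source $x_i$ with arcs $x_i\to x_i^t$ and $x_i\to x_i^f$; for each clause $c_j$ a sink $c_j$ with arcs $c_j^t\to c_j$ and $c_j^f\to c_j$; and for each occurrence of $x_i$ in $c_j$ the arcs $x_i^t\to c_j^t$ and $x_i^f\to c_j^f$. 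For each such occurrence I create one agent with origin $x_i$ and destination $c_j$, so that $x_i$ sources exactly as many agents as it has occurrences.

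The key design lies in the latency functions. On each truth-gadget arc I would set $\ell_{x_i\to x_i^t}(x)=\ell_{x_i\to x_i^f}(x)=1/x$ for $x\le 3$ and $\infty$ otherwise, exactly as in the proof of Theorem~\ref{hard-vc}: routing any positive number $k$ of agents across such an arc costs $k\cdot(1/k)=1$, so a variable whose agents all take the same branch contributes cost $1$, whereas splitting its agents across both branches costs $2$. All remaining arcs get latency $0$ up to their capacity and $\infty$ beyond it, where the capacity of each $x_i^t\to c_j^t$ and $x_i^f\to c_j^f$ is $1$, the capacity of each $c_j^t\to c_j$ is $1$, and the capacity of each $c_j^f\to c_j$ is $2$. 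Setting the budget to $\lambda=n$, the intended semantics is that routing $x_i$'s agents through $x_i^t$ (resp.\ $x_i^f$) encodes setting $x_i$ to true (resp.\ false); the $1/x$ gadget forces this choice to be \emph{consistent} across all occurrences of $x_i$ in any solution of cost $n$; and the capacities $1$ and $2$ at a clause force exactly one of its three incident agents through $c_j^t$ and the other two through $c_j^f$, i.e., exactly one true literal.

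For correctness I would argue both directions. A satisfying \emph{1-in-3} assignment yields a routing in which every variable is consistent (total cost $n$) and every clause receives flow $1$ on $c_j^t$ and $2$ on $c_j^f$, respecting all capacities; hence cost $\lambda=n$ is achievable. Conversely, any finite-cost routing must deliver all agents, so each variable incurs cost at least $1$ and the total is at least $n$; a routing of cost exactly $n$ therefore forces every variable onto a single branch (a consistent assignment), while feasibility of the three agents entering each $c_j$ against capacities $1$ and $2$ forces exactly one true literal per clause, giving a satisfying assignment. Finally, $G'$ is layered $x_i\to x_i^{t/f}\to c_j^{t/f}\to c_j$ and hence a DAG; the truth-gadget arcs have capacity $3$ while all others have capacity at most $2$, so $\cmax=3$; and in the skeleton each $x_i^{t/f}$ has degree $1+(\text{occurrences of }x_i)\le 4$ and each $c_j^{t/f}$ has degree $3+1=4$, with all other vertices of degree $2$, so the maximum degree is $4$. \textbf{The main obstacle} I anticipate is twofold: first, cleanly arguing that the $1/x$ gadget genuinely rules out ``cheating'' splits (a split can never help satisfy clauses more cheaply, since it strictly raises the cost above the budget); and second, securing the \NP-hardness of the source problem under the at-most-three-occurrences restriction, which is precisely what keeps both the degree and $\cmax$ bounded by small constants.
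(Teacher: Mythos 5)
Your proposal is correct and takes essentially the same approach as the paper: both reduce from positive cubic (bounded-occurrence) exact 1-in-3 SAT via the identical layered variable/clause digraph, with one agent per occurrence and capacity-$1$ and capacity-$2$ arcs entering each clause vertex to force exactly one true literal. The only cosmetic difference is the latency on the variable arcs---you use $\ell(x)=1/x$ with budget $\lambda=n$, while the paper uses $\ell(1)=\ell(2)=1$, $\ell(3)=0$ with budget $0$---and both enforce the same per-variable consistency, so the arguments coincide.
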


\begin{proof}
We reduce from a variant of SAT known as monotone cubic exact 1-in-3 SAT, which is known to be \NP-hard~\cite{PorschenSSW14,Schmidt2010d}.
Here, clauses have size three and must be satisfied by a single literal, each variable appears in three clauses, and all literals are positive.

Suppose we are given such an $n$-variate $m$-clause positive cubic $3$-CNF formula $\varphi$.
The instance of \problem is constructed as follows.
It contains the sets of vertices $X = \{x_1,\ldots,x_n\}$ and $C = \{c_1,\ldots,c_m\}$ that are identified with the variables and clauses appearing in $\varphi$, respectively.
We also add two more copies of $X$, namely $X^T = \{x^T_1,\ldots,x^T_n\}$ and $X^F = \{x^F_1,\ldots,x^F_n\}$, and, similarly, two more copies $C^T$ and $C^F$ of $C$. 
Now, $x_i$ has an arc to $x^T_i$ and one to $x^F_i$, both with latency $\ell(1) = \ell(2) = 1$ and $\ell(3) = 0$, and capacity $3$.
Whenever $x_i$ appears in the $j$-th clause of $\varphi$, then $x^T_i$ has an arc to $c^T_j$ and $x^F_i$ has an arc to $c^F_j$, both of latency $0$.
Further, $c^T_i$ has an arc to $c_i$ of capacity $1$ and $c^F_i$ has an arc to $c_i$ of capacity $2$, for all $i \in [m]$, and latency $0$ otherwise.
If $x_i$ appears in the $j$-th clause of $\varphi$, then there is an agent with origin $x_i$ and destination $c_j$. An illustration of the construction is provided in Figure \ref{fig:td-1-in-3} (right).

We claim that this instance of \problem{} has a solution of cost $0$ if and only if $\varphi$ is 1-in-3-satisfiable.
Any satisfying 1-in-3 assigment of $\varphi$ induces a routing of agents of cost $0$ by construction.
On the other hand, take any solution of cost $0$ to \problem.
This means, first, that all agents with origin $x_i$ are routed via the same choice of $x_i^T$ or $x_i^F$ by the respective latency functions of these arcs. In particular, we can derive an assignment $\pi$ to the variables from this information, and we claim that it is in fact a satisfying 1-in-3 assignment.
Towards showing that $\pi$ is satisfying, observe that, by the capacity constraints on the arc from $c^F_j$ to $c_j$, for every clause, at most $2$ variables can use the incoming arcs into $c^F_j$ from vertices $x^F_i$. Further, by the capacity constraints on the arc from $c^T_j$ to $c_j$, for every clause, at most $1$ variable can use the incoming arcs into $c^T_j$ from vertices $x^T_i$. Hence, exactly one of the agents must be routed via $c^T_j$, coming from a vertex $x^T_i$. Thus, every clause is satisfied by $\pi$ and $\pi$ is also a 1-in-3 assignment of $\varphi$.

Note that the produced instance has the desired properties from the statement of the theorem by the fact that $\varphi$ was assumed to be cubic.
\end{proof}

In fact, we can also show that our problem of interest remains intractable even on extremely simple networks which are DAGs (complementing Theorem~\ref{hard-vc}).

\begin{theorem}
\label{hard-tw-delta}
\problem\ is \NP-hard even when restricted to DAGs whose skeletons are planar graphs with treewidth and maximum degree both upper-bounded by $3$.
\end{theorem}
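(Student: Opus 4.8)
The plan is to reduce from \textsc{MUKS} with unary-encoded numbers, which is \NP-hard by Lemma~\ref{lem:muks}, reusing the core idea behind Theorem~\ref{hard-vc} but ``stretching out'' the two high-degree hubs $h_0,h_1$ into two long directed paths, so as to push the maximum degree down to $3$ while keeping the skeleton planar and of small treewidth. Concretely, given a \textsc{MUKS} instance with vectors $\vec v_1,\dots,\vec v_n\in\N^d$, target $\vec T$, and threshold $k$, I would build two parallel directed paths $P^0$ and $P^1$, each on $n+d$ vertices, whose first $n$ vertices serve as entry points (one per vector) and whose last $d$ vertices serve as exit points (one per dimension). For each vector $i$ I add a source $s_i$ with arcs into the $i$-th entry vertex of each path, and for each dimension $j$ a destination $t_j$ receiving arcs from the $j$-th exit vertex of each path; exactly as in Theorem~\ref{hard-vc}, I place $v_{i,j}$ agents from $s_i$ to $t_j$.

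The latencies would mirror the $K_{2,N}$ construction: every arc of the ``include'' path $P^0$ is free, except that its exit arc into $t_j$ receives capacity $T_j$ (latency $0$ up to $T_j$ and $\infty$ beyond); on the ``exclude'' path $P^1$ all arcs are free except the entry arc $s_i\to P^1$, which carries latency $\ell(x)=1/x$. First I would check that routing is forced: since on each path the only way to leave is at the exit vertex matching an agent's destination, every agent $(s_i,t_j)$ has exactly two available routes---one traversing $P^0$, one traversing $P^1$---sharing only the initial branch at $s_i$. Consequently the flow on the capacitated arc into $t_j$ equals the total include-contribution $\sum_{i\text{ included}} v_{i,j}$, encoding precisely the constraint $\le T_j$, while each source sending any agent along $P^1$ contributes exactly $1$ to the objective via the $1/x$ latency.

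With this in place, the correctness argument is essentially that of Theorem~\ref{hard-vc}: the $1/x$ latency makes partial exclusion of a source cost the same as full exclusion, so I may assume each source is routed entirely via $P^0$ or entirely via $P^1$; the cost then equals $n$ minus the number of fully-included sources, and a routing of cost at most $n-k$ exists iff \textsc{MUKS} admits a feasible subset of size at least $k$. It remains to verify the structural guarantees. Orienting all path arcs forward, all source arcs into the paths, and all destination arcs out of the paths makes the digraph acyclic; drawing $P^0$ and $P^1$ as two horizontal rows and placing each $s_i$ (resp.\ $t_j$) on the rung joining the two $i$-th (resp.\ $(n{+}j)$-th) path vertices exhibits the skeleton as a ladder whose rungs are each subdivided once, which is planar and has treewidth $2\le 3$; a direct degree count at the entry and exit vertices (in-degree $2$, out-degree $1$, or vice versa) confirms maximum degree $3$.

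The main obstacle I anticipate is not transferring correctness from Theorem~\ref{hard-vc}, which is immediate, but simultaneously enforcing the forced routing and the three structural bounds. The key design choice is to order all $n$ source-entry points strictly before all $d$ target-exit points along each path, so that every source can still reach every destination---preserving the all-to-all connectivity that the hubs provided---yet the capacitated exit arc for dimension $j$ sees exactly the aggregated include-flow for that dimension. This ``sources-then-targets'' layout is precisely what keeps the skeleton a subdivided ladder of bounded degree and bounded treewidth while still simulating the two hubs of the original reduction.
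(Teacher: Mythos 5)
Your reduction is correct, and its skeleton matches the paper's almost exactly: the paper also reduces from \textsc{MUKS}, also stretches the two hubs of Theorem~\ref{hard-vc} into directed paths ($h_1,\dots,h_{n+1}$ and $h'_1,\dots,h'_{n+1}$ with $s_i$ attached to $h_i$ and $h'_i$), also places the $1/x$ latency on the exclude-side entry arcs $(s_i,h'_i)$, also puts the capacity $T_j$ on the unique include-side arc into $t_j$, and runs the same correctness argument. The one genuine difference is the destination gadget. The paper fans out from $h_{n+1}$ and $h'_{n+1}$ to the $d$ destinations via two directed almost complete binary trees glued at their leaves; this keeps the degree at $3$ but makes the treewidth bound nontrivial, and the paper spends the last third of its proof on a cops-and-robbers argument (four cops, two coordinated tree-searching teams chasing the robber and its ``shadow'') to certify treewidth at most~$3$. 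You instead keep the two paths going through $d$ exit vertices and hang each $t_j$ as a subdivided rung between them, so the whole skeleton is a ladder with subdivided rungs---planar, maximum degree $3$, and treewidth $2$ by a standard fact (series-parallel graphs are closed under subdivision), with no search-game argument needed. Your version is thus a mild but real simplification: it trades the binary-tree fan-out for a longer path, buys a cleaner structural analysis and an even stronger width bound ($2$ rather than $3$), and loses nothing, since all the internal path arcs are free and the flow through the capacitated arc into $t_j$ is identical in both constructions.
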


\begin{proof}
To prove the statement of the theorem, we present a reduction from \textsc{MUKS} to \problem.
Starting with an instance $\mathcal{I}$ of \textsc{MUKS} encoded in unary, we construct an instance $\mathcal{I}'$ of \problem{} as follows.

For each $\vec v_i \in S$ ($i\in [n]$) in $\mathcal{I}$, there is a source vertex $s_i$ in $\mathcal{I}'$.
For each entry $T_j\in \vec T$ ($j\in [d]$) in the target vector in $\mathcal{I}$, there is a terminal vertex $t_j$ in $\mathcal{I}'$.
The rest of $\mathcal{I}'$ is constructed as follows.
There are two vertices $h_i$ and $h'_i$ for each $i\in [n+1]$.
For each $i\in [n]$, there are the arcs $(h_i,h_{i+1})$, $(h'_i,h'_{i+1})$, $(s_i,h_i)$, and $(s_i,h'_i)$.
There is a directed almost complete binary tree (the level with the leaves may not be completely filled) whose arcs go from the root to the leaves with $h_{n+1}$ ($h'_{n+1}$, respectively) as the root and $t_1,\ldots,t_d$ as the leaves.
The two directed almost complete binary trees are symmetric with respect to the leaves.

For each $i\in [n]$ and $x\in \mathbb{N}$, let $\ell_{(s_i,h'_i)}(x) = 1/x$. 
For each $j\in [d]$, let $e_j$ be the arc incoming into $t_j$ from the directed almost complete binary tree with $h_{n+1}$ as the root, and, for each $x\in \mathbb{N}$, let
\[
\ell_{e_j}(x) = 
\begin{cases} 
	0 $ if $ x \leq T_j \\
	\infty $ otherwise$.
\end{cases}
\]
In other words, for each $j\in [d]$, the capacity of the arc $e_j$ is equal to $T_j$.
For each remaining arc $e$ for which a latency function has not yet been defined, let $\ell_{e}(x) = 0$ for each $x\in \mathbb{N}$.

For each $\vec v_i \in S$ ($i\in [n]$) and $j\in [d]$, we want to route $v_{i,j}$ agents from $s_i$ to $t_j$, where $v_{i,j}$ is the $j$-th entry of $\vec v_i$.
That is, there are $v_{i,j}$ copies of the tuple $(s_i, t_j)$ for each $i\in [n]$ and $j\in [d]$.
This completes the construction of $\mathcal{I}'$, which is clearly achieved in polynomial time.
See Figure~\ref{fig:tw_deg_red} for an illustration.

We now show that there is a solution comprised of at least $k$ vectors for $\mathcal{I}$ if and only if there is a solution of cost at most $n-k$ for $\mathcal{I}'$.
First, assume there is a solution comprised of at least $k$ vectors for $\mathcal{I}$.
For each of the vertices $s_i$ corresponding to one of these at least $k$ vectors in the solution for $\mathcal{I}$, route each of the agents at $s_i$ to their corresponding destination $t_j$ via the directed path that contains $h_i$.
Since these vectors are in the solution for $\mathcal{I}$, for each $j\in [d]$, the capacity of the arc $e_j$ has not been surpassed, and thus, so far, the cost of routing these agents is $0$ by the construction.
The remaining agents are all routed to the $t_j$'s via the directed path that contains $h'_{n+1}$. 
Since, for each $i\in [n]$ and $x\in \mathbb{N}$, we have that $\ell_e(x) = 1/x$ for any arc $e$ of the form $(s_i,h'_i)$, then $f(e) \ell_e(f(e)) = x \cdot 1/x = 1$, where $f(e) = x$.
Hence, the cost of a solution for $\mathcal{I}'$ is at most $n-k$ since there are at most $n-k$ such $s_i$'s where agents are routed to their destinations via $h'_{n+1}$, and the only arcs on these directed paths that have non-zero latency are the ones of the form $(s_i,h'_i)$.

Now, suppose we are given a solution for $\mathcal{I}'$ of total cost at most $n-k$.
By the property of the latency function $1/x$ for any arc of the form $(s_i,h'_i)$, this means that there at most $n-k$ vertices in $\{h'_1,\ldots,h'_n\}$ that had agents routed to them through their incoming arc from $s_i$.
Hence, for at least $k$ of the vertices in $s_1,\ldots,s_n$, all of their agents must have been routed to the $t_j$'s via the $e_j$'s.
Since the total cost of the solution for $\mathcal{I}'$ is at most $n-k$ (not $\infty$), then, for each $j\in [d]$, the capacity $T_j$ of the arc $e_j$ was not surpassed by this routing.
Hence, the sum of the vectors corresponding to these at least $k$ vertices in $s_1,\ldots,s_n$ must be at most $\vec T$.

It now remains to show that the digraph in $\mathcal{I}'$ has the desired properties from the theorem statement. 
It is trivial to see that the digraph in $\mathcal{I}'$ is a planar DAG, and that its skeleton has maximum degree~$3$.
To show that the treewidth of its skeleton is at most~$3$, we give a strategy for four cops to capture the robber in the equivalent cops and robbers definition of treewidth: a graph $G$ has treewidth $k$ if and only if the minimum number of cops that can ensure capturing the robber in $G$ is $k+1$~\cite{SeymourT93}.

The four cops first occupy $h_{n+1}$, $h'_{n+1}$, $s_n$, and $h_n$, respectively.
If the robber occupies an $s_i$, $h_i$ or $h'_i$ vertex, then it is easy to see that the cops can capture the robber.
Thus, assume that the robber territory is restricted to the two almost complete binary trees that are joined at the leaves minus their roots.
The first and second cops occupy $h_{n+1}$ and $h'_{n+1}$, respectively. The third and fourth cops move to a child of the first and second cops, respectively.
Let the first and third cop be one team of cops, and let the second and fourth cops be the other team of cops.
From this point on, the teams of cops play in an alternating fashion.
The first and third (the second and fourth, respectively) cops then apply the following well-known winning strategy for two cops in their respective tree~\cite{SeymourT93}: at each point of time the two cops will occupy two pairwise adjacent vertices in the tree, and in each round the cop occupying the vertex closer to the root will move to a child of the other cop in the direction of the robber if the robber is located in their tree. If the robber is located in the other tree, the strategy is the same but we attempt to catch the ``shadow'' of the robber in their subtree, where the robber's shadow is simply the vertex corresponding to the robber's position in the mirroring almost complete binary tree. 
\end{proof}

\begin{figure}
\begin{center}
\includegraphics[scale=0.64]{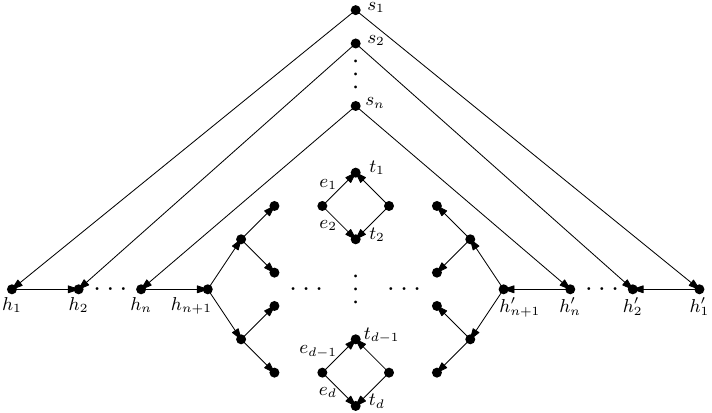}
\end{center}
\caption{The digraph constructed in the proof of Thm.~\ref{hard-tw-delta}.}
\label{fig:tw_deg_red}
\end{figure}

\section{A Fixed-Parameter Algorithm for \problem}

Recapitulating the results so far, in view of Theorem~\ref{hard-vc}, we are focusing our efforts on identifying structural properties of networks which would allow us to solve \problem\ on bounded-capacity networks. Theorem~\ref{hard-td} rules out tractability via standard vertex-separator based graph parameters, while Theorem~\ref{hard-dags} excludes the use of commonly studied directed variants of network parameters. A notable class of structural measures that has not been addressed yet by the obtained lower bounds are parameters that are tied to the existence of \emph{edge cuts} as opposed to \emph{vertex separators}.

Two ``baseline'' graph measures which allow for a structural decomposition along bounded-sized edge cuts are the \emph{feedback edge number} (\fen)~\cite{KoanaKNNZ21,FuchsleMNR22},
and the combined parameter of \emph{treewidth plus maximum degree} (\twd)~\cite{OrdyniakS13,GozupekOPSS17}.  Unfortunately, both of these measures place very strong restrictions on the network: the former is only small on networks whose skeletons are trees plus a small number of arcs, while the latter cannot be used on networks with high-degree nodes. 

As our main algorithmic contribution, we establish fixed-parameter tractability of bounded-capacity \problem\ with respect to a recently introduced edge-cut based parameter called \emph{slim treecut width} (\stcw)~\cite{GanianK22}. Crucially, this result immediately implies and generalizes fixed-parameter tractability with respect to either \fen\ or \twd, while also circumventing both of the aforementioned shortcomings: networks with skeletons of bounded slim treecut width can have high degree as well as be significantly more complex than just a tree. 

\begin{theorem}
\label{thm:stcwfpt}
\problem\ is fixed-parameter tractable when parameterized by the slim treecut width of the skeleton $\underline G$ of the input digraph $G$ plus the maximum capacity $\cmax$.
\end{theorem}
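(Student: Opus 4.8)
The plan is to combine the spanning-tree characterization of slim treecut width with a bottom-up dynamic programming over that tree. First I would invoke Lemma~\ref{lem:computestcw} to compute, in fixed-parameter time, a spanning tree $T$ of a supergraph $G'$ of $\underline G$ whose edge-cut width $w := \ecw(G',T)$ is bounded by $3(\stcw(\underline G)+1)^2$, and root $T$ arbitrarily. The single structural fact driving everything is that for every node $v$, the arcs of $G$ having exactly one endpoint among the vertices $V_v$ below $v$ in $T$ form an edge cut of size at most $w$: each such non-tree arc has its $T$-path through $v$ and is therefore counted in $E_{\loc}^{G',T}(v)$. Combined with the capacity bound, each of these $\leq w$ crossing arcs carries at most $\cmax$ agents, so at most $w\cdot\cmax$ agents cross any cut---this boundedness is the linchpin of the whole argument.

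Next I would set up the dynamic program. The signature of a partial solution at $v$ records (i) for each crossing arc the number of agents routed through it, a value in $\{0,\dots,\cmax\}$, and (ii) a \emph{linkage} describing how the (at most $w\cdot\cmax$) crossing agents thread through $V_v$: which crossing-slots belong to the same agent's path, and, for each such agent, two flags indicating whether its source and/or its destination already lie inside $V_v$. Since both the number of crossing slots and the linkage are bounded by a function of $w$ and $\cmax$, the number of signatures is bounded by $g(\stcw,\cmax)$ for some computable $g$. The value stored for a signature is the minimum cost $\sum f(e)\,\ell_e(f(e))$ accrued on arcs already internal to $V_v$ (charging each crossing arc once, at the node where it is resolved) over all partial routings realizing that signature. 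Leaves are initialized directly; at the root, the instance is a \emph{yes}-instance iff some \emph{complete} signature---one in which every agent has both endpoints placed, so all $m$ agents are fully routed---has value at most $\lambda$.

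The transitions are where the work lies. A node $v$ of $T$ is a single vertex of $G'$ but may have many children, so I would merge the children one at a time by an auxiliary left-to-right dynamic program: at each step the intermediate signature again involves only $\leq w$ crossing arcs, and I enforce flow conservation at $v$ and the capacities of the arcs incident to $v$, extend the linkage by splicing strands that meet at $v$, flip the source/destination flags when $v$ is the source or destination of an agent (consuming the corresponding demand $d_{s,t}$), and add the newly internalized arc costs. Reading off the optimum at the root then yields a running time of $g(\stcw,\cmax)\cdot(n+m)^{\bigoh(1)}$, establishing fixed-parameter tractability.

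The main obstacle I anticipate is guaranteeing that a bounded-size signature certifies a \emph{globally commodity-consistent} routing---that the crossing strands are threaded into genuine $s_i\!\to\! t_i$ paths and that the realized source--destination multiset equals the prescribed demands $\{d_{s,t}\}$---\emph{without} storing the identities of the (potentially very many) sources and destinations, which would only yield a running time of the form $n^{f(\stcw,\cmax)}$ rather than fixed-parameter tractability. Since a commodity-blind routing that merely balances flow at every vertex can link the wrong sources to the wrong destinations, the linkage must track enough to forbid this, yet remain independent of $n$ and $m$. Resolving this tension is the crux: it relies on the $w\cdot\cmax$ bound on crossing agents, on the interchangeability of agents sharing a source--destination pair, and on a careful demand-accounting within the linkage (e.g.\ assigning a bounded palette of colours to the commodities that may simultaneously cross a single cut, so that the matching of strands and the consumption of the demands $d_{s,t}$ can be verified locally). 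This commodity-bookkeeping, together with the sequential merge at high-degree tree nodes, is the technically heavy heart of the proof.
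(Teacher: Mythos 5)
Your high-level strategy coincides with the paper's: compute the pair $(G',T)$ via Lemma~\ref{lem:computestcw}, root $T$, and run a leaf-to-root dynamic program whose states describe how agent paths cross the arc boundary of each subtree, exploiting the fact that in a YES-instance at most $\cmax\cdot(k+1)$ agents can have their source inside and destination outside any such boundary (and symmetrically). Even the ``crux'' you flag at the end---certifying commodity consistency without storing all endpoint identities---is resolved in the paper by exactly the ingredient you name: because only boundedly many agents genuinely cross the cut, the record stores an explicit map from each such agent to its first exit arc (resp.\ last entry arc), namely $\Sout$ and $\Sin$, while all other boundary traffic (outside agents dipping into $\forgottenG_v$, inside agents dipping out) is kept anonymous as multisets $\D$ and $\R$ of arc pairs. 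So that part of your sketch, while not worked out, is completable along the lines you indicate.

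The genuine gap is your treatment of high-degree tree nodes. The claim that in a left-to-right merge of the children of $v$ ``the intermediate signature again involves only $\leq w$ crossing arcs'' is false: edge-cut width bounds only boundaries of sets of the form ``all descendants of a node,'' and a partial union $\{v\}\cup V_{w_1}\cup\dots\cup V_{w_i}$ is not of this form---its boundary contains the arcs between $v$ and every not-yet-merged child, i.e.\ up to $\deg(v)$ many arcs. Concretely, take a bidirected star with center $v$ and leaves $w_1,\dots,w_n$, one agent from $w_j$ to $w_{j+1}$ for each $j$, and $\cmax=1$: the edge-cut width is $1$, yet once roughly half the leaves have been merged, $\Theta(n)$ agents whose paths are $w_j\to v\to w_{j+1}$ enter and exit the intermediate set, so neither the number of crossing arcs nor the number of crossing agents (hence neither your per-arc flow counts nor your linkage) is bounded by any function of the parameters, for any merge order. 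The missing idea is the paper's distinction between \emph{simple} and \emph{complex} children: at most $2k$ children are complex (their subtrees meet local feedback edges at $v$), while every simple child $w$ interacts with the rest of the graph only through the at most two arcs between $w$ and $v$, so---paths being simple---no through- or dip-traffic across $w$ is possible and $w$ admits a \emph{unique} snapshot; its optimal cost is a forced lump sum $b(v)$, and after deleting all simple children (teleporting their agents' endpoints to $v$) and contracting each complex child to its boundary, the instance at $v$ kernelizes to $\bigoh(k^2)$ vertices, on which flow assignments can be enumerated by brute force. Without some such forcedness/kernelization argument for the unboundedly many pendant-like children, your sequential merge does not have a parameter-bounded state space.
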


\begin{proof}
We begin by invoking Lemma~\ref{lem:computestcw} to compute a pair $(H,T)$ such that $H$ is a supergraph of $\underline G$ and $(H,T)$ has edge-cut width $k\leq 3\cdot (\kappa+1)^2$, where $\kappa$ is the slim treecut width of $\underline G$. Our algorithm is based on a leaf-to-root dynamic programming procedure that traverses $T$ while storing certain (carefully defined and bounded-sized) records about the part of $T$ that has been processed so far. To this end, it will be useful to assume that $T$ is rooted, and thus, we mark an arbitrary leaf of $T$ as the root and denote it~$r$.

Before we define the records used in the dynamic program, we will need some terminology. For a vertex $v\in V(H)$ with a child $w\in V(H)$, we say that $w$ is a \emph{simple} child of $v$ if $v$ and $w$ belong to different connected components of $H-vw$; otherwise, we say that $w$ is a \emph{complex} child of $v$. Observe that, while the number of simple children of $v$ is not bounded by $k$ (none of the subtrees rooted at simple children contain any edges in $E_{\loc}^{H,T}(v)$), the number of complex children of $v$ is upper-bounded by $2k$ (each subtree rooted at a complex child contains an endpoint of at least one edge in $E_{\loc}^{H,T}(v)$). Furthermore, we use $\forgottenG_v$ to denote the sub-digraph of $G$ induced on the vertices that are descendants of $v$ (including $v$ itself), and $\partial_v$ to denote those arcs of $G$ which have precisely one endpoint in $\forgottenG_v$; recall that $|\partial_v|\leq 2(k+1)$. An agent $a_i$ is \emph{outgoing} for $v$ if $s_i\in \forgottenG_v$ and, at the same time, $t_i\not \in \forgottenG_v$. Similarly, an agent $a_i$ is \emph{incoming} for $v$ if $s_i\notin \forgottenG_v$ and, at the same time, $t_i \in \forgottenG_v$.

We are now ready to formalize the dynamic programming records used in our algorithm. A \emph{snapshot} at a vertex $v$ is a tuple of the form $(\Sout,\Sin,\D,\R)$, where:

	\begin{itemize}[topsep=0pt]
		\item $\Sout$ is a mapping from the set of all outgoing agents for $v$ to arcs in $\partial_v$ which are outgoing from $\forgottenG_v$;
		\item $\Sin$ is a mapping from the set of all incoming agents for $v$ to arcs in $\partial_v$  which are incoming to $\forgottenG_v$;
		\item $\D$ is a multiset of pairs $(e,f)$ such that $e$ is an incoming arc into $\forgottenG_v$, $f$ is an outgoing arc from $\forgottenG_v$, and $ef$ is not a $2$-cycle;
		\item $\R$ is a multiset of pairs $(e,f)$ such that $e$ is an outgoing arc from $\forgottenG_v$, $f$ is an incoming arc into $\forgottenG_v$, and $ef$ is not a $2$-cycle;
		\item each arc in $\partial_v$ may only appear in at most $\cmax$ tuples over all of the entries in $\Sout, \Sin, \D, \R$.
	\end{itemize}

Before proceeding, it will be useful to obtain an upper-bound on the total number of possible snapshots for an arbitrary vertex $v$. First, we observe that if the number of outgoing agents for $v$ exceeds $\cmax \cdot(k+1)$, then every flow assignment in the considered instance must necessarily exceed the capacity $\cmax$ for at least one arc in $G$; in other words, such an instance can be immediately recognized as a NO-instance. The same also holds for the number of incoming agents for $v$. Hence, we can restrict our attention to the case where these simple checks do not fail, and use this to obtain a bound on the total number of snapshots at $v$. In particular, there are at most $(2k+2)^{\cmax \cdot(k+1)}$ possible choices for each of $\Sout$ and $\Sin$ (there are at most $2k+2$ choices of which element of $\partial_v$ to map each of the at most $\cmax \cdot(k+1)$ many agents to), and also at most $(\cmax+1)^{4(k+1)^2}$ possible choices for $\D$ and $\R$ each (for each of the at most $4(k+1)^2$ possible unordered pairs of arcs from $\partial_v$, there are $(\cmax+1)$ possible choices of how many times it occurs in $\D$ and $\R$). We conclude that the number of snapshots at $v$ can be upper-bounded by $(\cmax+k)^{\bigoh(\cmax k^2)}$, and let $\snap(v)$ denote the set of all possible snapshots at $v$.

We can now formalize the syntax of the record at $v$, denoted $\Rec(v)$, as a mapping from $\snap(v)$ to $\mathbb{R}_{\geq 0}\cup \{\infty\}$. As for the semantics, $\Rec(v)$ will capture the minimum cost required to (1) route the outgoing and incoming agents to the designated arcs in $\Sout$ and $\Sin$, while assuming that (2) some (unidentified and arbitrary) agents will use arcs of $\partial_v$ to enter and then exit $\forgottenG_v$ via the arcs designated in $\D$, and that (3) some (unidentified and arbitrary) agents will use arcs of $\partial_v$ to exit and then return to $\forgottenG_v$ via the arcs designated in~$\R$. 
Formally, $\Rec(v)$ maps each snapshot $\Upsilon=(\Sout,\Sin,\D,\R)$ to the minimum cost of a flow assignment $F$ in the subinstance $\mathcal{I}_\Upsilon$ induced on the vertices of $\forgottenG_v$ plus the arcs\footnote{We remark that since only one of the endpoints of $\partial_v$ lies in $\forgottenG_v$, these are not formally arcs in $\forgottenG_v$.} of $\partial_v$ with the following properties:
\begin{enumerate}
\item for each pair $(ab,cd)$ of arcs in the multiset $\D$ such that $b,c\in \forgottenG_v$, we add a new marker agent into $I_\Upsilon$ which starts at $a$ and ends at $d$;
\item for each pair $(ab,cd)$ of arcs in the multiset $\R$ such that $a,d\in \forgottenG_v$, we add a new arc $bc$ into $\mathcal{I}_\Upsilon$ and set $\ell_{bc}:=\{1\mapsto 0\}\cup \{i\mapsto \infty~|~i>1\}$.
\item for each outgoing agent $a_i$, $F$ contains a path from $s_i$ into the arc $\Sout(a_i)$;
\item for each incoming agent $a_i$, $F$ contains a path from (and including) the arc $\Sin(a_i)$ to $t_i$.
\end{enumerate}

If no flow with these properties exists, we simply set $\Rec(v)(\Upsilon):=\infty$. This completes the formal definition of the records $\Rec(v)$. Observe that, since $\forgottenG_r=G$ and $\partial_r=\emptyset$, the only snapshot at $r$ is $\Upsilon_r=(\{\emptyset\},\{\emptyset\},\emptyset,\emptyset)$ and $\mathcal{I}_{\Upsilon_r}$ is precisely the input instance to our problem. Hence, if we successfully compute the record $\Rec(r)$ for the root vertex $r$, then $\Rec(r)(\Upsilon_r)$ must be equal to the minimum cost of a flow assignment $F$ in the input instance. Thus, to conclude the proof it remains to show how to compute the records at each vertex in a leaf-to-root fashion. 

Towards this task, let us first consider the computation of $\Rec(v)$ for a leaf $v$ in $T$. Here, we observe that each of the constructed instances $\mathcal{I}_\Upsilon$ consists of the at most $2k+2$ arcs in $\partial_v$, and the at most $(k+1)\cdot \cmax$ arcs obtained from $\R$. The number of paths in such an instance is upper-bounded by $2^{\bigoh(k\cdot \cmax)}$, and hence, the minimum cost of a flow assignment $F$ in $\mathcal{I}_\Upsilon$ can be computed by enumerating all flow assignments in time at most $\cmax^{2^{\bigoh(k\cdot \cmax)}}$.

The core of the dynamic program lies in the computation of $\Rec(v)$ for a non-leaf vertex $v$. We do so by considering each snapshot $\Upsilon=(\Sout,\Sin,\D,\R)$ at $v$ independently, and computing the minimum cost of a flow assignment $F$ in the subinstance $\mathcal{I}_\Upsilon$ as follows. For each simple child $w$ of $v$, we observe that there is only a single snapshot $\Psi_w=(\Sout^w,\Sin^w,\{\emptyset\},\{\emptyset\})$ of $w$ where $\Sout^w$ maps all outgoing agents for $w$ and $\Sin^w$ maps all incoming agents for $w$ to the respective unique arcs in $\partial_w$. This corresponds to the fact that since a flow assignment is inherently loopless, there is only a unique way it may pass through the arcs in $\partial_w$. Let us define the \emph{base cost} of $v$, denoted $b(v)$, as $\sum_{w\text{ is a simple child of v}}\Rec(w)(\Psi_w)$; intuitively, $b(v)$ captures the minimum cost required by a flow assignment in all simple children of $v$. 

Next, we construct the instance $\mathcal{I}_\Upsilon$ for which we need to compute the minimum cost of a flow assignment. Essentially, our aim is to compute this cost via brute-forcing over all possible flow assignments, but at first glance this seems infeasible since the size of $\mathcal{I}_\Upsilon$ is not bounded by our parameters. The core insight we use to overcome this is that even though $\mathcal{I}_\Upsilon$ could be large, all but only a parameter-bounded number of interactions have already been taken into account in the records of the children of $v$.
Formally, we make use of this by constructing a ``kernelized'' instance $\mathcal{I}^+_\Upsilon$ as follows:

First, for each simple child $w$ of $v$, delete the whole subtree rooted at $w$, whereas, for each outgoing agent $a_i$ at $w$, we place $s_i$ on $v$, and, for each incoming agent $a_i$ at $w$, we place $t_i$ on $v$.
Then, for each complex child $w$ of $v$, we
\begin{enumerate}
\item delete every vertex in $\forgottenG_w$ except for the endpoints of $\partial_w$;
\item for each pair of endpoints $a,b\in V(\forgottenG_w)$ of distinct arcs in $\partial_w$, add bi-directional ``marker'' arcs between $a$ and $b$;
\item create a vertex $w_\text{out}$, add a directed arc from $w_\text{out}$ to every endpoint of an arc $\partial_w$ in $V(\forgottenG_w)$, and for outgoing agent $a_i$ at $w$ we place $s_i$ on $w_\text{out}$;
\item create a vertex $w_\text{in}$, add a directed arc from every endpoint of an arc $\partial_w$ in $V(\forgottenG_w)$ to $w_\text{in}$, and for an incoming agent $a_i$ at $w$ we place $t_i$ on $w_\text{in}$.
\end{enumerate}

We remark that the newly created arcs are not associated with a latency function (or alternatively may be assumed to be associated with the degenerate latency function $\mathbb{N}\rightarrow \{0\}$); these arcs are merely used as markers to point us to which of the snapshots should be used for each complex child of $v$. Crucially, the number of vertices in $\mathcal{I}^+_\Upsilon$ is upper-bounded by $(k\cdot 2k)+1\leq \bigoh(k^2)$. The number of paths in such an instance is upper-bounded by $k^{\bigoh(k^2)}$, and hence, the total number of all possible flow assignments can be upper-bounded by $\cmax^{k^{\bigoh(k^2)}}$. We proceed by enumerating the set of all potential flow assignments, and, for each such potential flow assignment, we check whether it is indeed a flow assignment for the agents specified in $\mathcal{I}^+_\Upsilon$. If this check succeds, we employ a further technical check to ensure that each path in $P$ only contains a marker arc $e$ in a complex child $w$ if $e$ is immediately preceded and also succeeded by an arc in~$\partial_w$. 

For each flow assignment in $\mathcal{I}^+_\Upsilon$ that passes the above checks and, for each complex child $w$ of $v$, we observe that $F$ restricted to the arcs with at most one endpoint in $w$ fully determines (1) the first arc in $\partial_w$ used by an outgoing agent for $w$, (2) the last arc in $\partial_w$ used by an incoming agent for $w$, (3) which pairs of arcs in $\partial_w$ are used by paths to leave and subsequently re-enter $\forgottenG_w$, and (4) which pairs of arcs in $\partial_w$ are used by paths to enter and subsequently leave $\forgottenG_w$. This information hence identifies a unique snapshot $\Psi^w_F$ of $w$. We define the cost of $F$ as $b(v)+\sum_{w\text{ is a complex child of }v}\Rec(w)(\Psi^w_F)$. Finally, we set $\Rec(v)(\Upsilon)$ to be the minimum cost of a flow $F$ in $\mathcal{I}^+_\Upsilon$ which satisfies the conditions stipulated in this paragraph; this concludes the description of the algorithm.

Apart from the time required to compute $(H,T)$, which is upper-bounded by $2^{\kappa^{\bigoh(1)}}\cdot n^4$,
the running time of the algorithm can be upper-bounded by the number of vertices in the input digraph times the cost of processing each vertex. The latter is dominated by $\cmax^{k^{\bigoh(k^2)}}$, i.e., $\cmax^{\kappa^{\bigoh(\kappa^4)}}$. We remark that as with essentially all width-based dynamic programming routines, the algorithm can be made constructive by performing a subsequent top-to-bottom computation in order to compute a specific flow assignment with the claimed flow as a witness. To argue correctness, it hence remains to argue that if the input instance admits a flow assignment $Q$ of minimum cost, say $p$, then the algorithm will compute a flow assignment with the same cost $p$. Towards this goal, we observe that at each vertex $v$ considered in the leaf-to-root pass made by the dynamic program, $Q$ will correspond to a unique snapshot $\Upsilon$ of $v$. At each leaf $v$ of $T$, $\Rec(v)(\Upsilon)$ must be equal to the cost incurred by $Q$ on the arcs in $\partial_v$ due to the nature of our brute-force computation of the records for trees and the optimality of $Q$. Moreover, for each non-leaf node $v$, it holds that as long as we have correctly computed the records for each of its children, the traversal of $Q$ via the arcs in $\partial_v$ and $\partial_w$ for each child of $w$ identifies a unique valid flow assignment $F$ in $\mathcal{I}^+_\Upsilon$, and this in turn defines a snapshot $\Psi^w_F$ for each child $w$ of $v$. In that case, however, $Q$ must indeed incur a cost of $b(v)+\sum_{w\text{ is a complex child of }v}\Rec(w)(\Psi^w_F)$ over all arcs in $\mathcal{I}_\Upsilon$, as desired.
\end{proof}

We observe that the parameterization by $\cmax$ cannot be dropped from Theorem~\ref{thm:stcwfpt} in view of the lower bound in Theorem~\ref{hard-tw-delta}; indeed, every network of bounded treewidth and maximum degree also has bounded slim treecut width (Lemma~\ref{lem:0tcw}). We conclude by turning our attention to whether Theorem~\ref{thm:stcwfpt} could be generalized to use the better-known \emph{treecut width} parameter instead of the slim variant used in that algorithm. Treecut width is a structural graph parameter that also guarantees the decomposability of instances via bounded-sized edge cuts, and has previously been used to establish tractability for several \NP-hard problems~\cite{GanianKS22}. Crucially, it forms an intermediary between the slim treecut width (which suffices for fixed-parameter tractability) and treewidth (for which \problem\ remains intractable, even when restricted to bounded-capacity instances). We show that Theorem~\ref{thm:stcwfpt} is tight in the sense that fixed-parameter tractability cannot be lifted to treecut width. 

\begin{theorem}
\label{W1-tcw}
\problem\ is \W\textup{[1]}-hard when parameterized by the treecut width of the skeleton of the input network, even when restricted to networks with $\cmax=1$.
\end{theorem}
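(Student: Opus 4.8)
The plan is to give an \FPT-reduction from the \textsc{Edge-Disjoint Paths} (\textsc{EDP}) problem, which is \W[1]-hard when parameterized by the treecut width of the input graph, to \problem\ with $\cmax=1$. Concretely, given an \textsc{EDP} instance consisting of a graph $G$ and a set $P$ of terminal pairs, I would reuse verbatim the edge gadget from the proof of Theorem~\ref{hard-td}: replace each edge $uv$ of $G$ by the directed arc gadget of Figure~\ref{fig:td-1-in-3} (left), equip every arc $e$ with the latency $\ell_e(x)=0$ for $x\le 1$ and $\ell_e(x)=\infty$ otherwise (forcing $\cmax=1$), and introduce, for each terminal pair $\{s,t\}\in P$, a single agent with origin $s$ and destination $t$. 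The equivalence is inherited from Theorem~\ref{hard-td}: the internal arc $b\to c$ of each gadget has capacity $1$ and is used by every path that traverses the gadget in either direction, so at most one agent can cross between $u$ and $v$; hence the constructed instance admits a flow assignment of cost $0$ if and only if $(G,P)$ is a yes-instance. The only point to double-check here is that the bidirectional arcs $u\leftrightarrow u'$ and $v\leftrightarrow v'$ let a path thread the gadget in whichever orientation is needed, so that the directed instance faithfully simulates an undirected unit-capacity edge.

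The technical heart of the argument---and the step I expect to be the main obstacle---is to certify that the construction is a \emph{parameterized} reduction, i.e., that $\tcw(\underline{G'})=\bigoh(\tcw(G))$ for the skeleton $\underline{G'}$ of the output digraph. I would prove this by lifting an optimal treecut decomposition $(T,\mathcal{X})$ of $G$ to one of $\underline{G'}$. For each edge $uv$ of $G$, its gadget contributes four internal vertices whose skeleton is a copy of $K_4$ with one edge removed, pendant-attached to $u$ and $v$; I would hang these four vertices in a fresh leaf bag. Such a bag has adhesion exactly $2$ (the arcs meeting $u$ and $v$) and constant torso-size, so the new nodes are themselves harmless.

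The delicate part is bounding the adhesions and torso-sizes of the \emph{original} nodes of $T$ after the insertions. Here I would exploit that each edge of $G$ crossing a cut $\cut(t)$ is represented, after substitution, by only a constant number of gadget arcs crossing the same cut; consequently every cut of size at most $\tcw(G)$ expands to one of size $\bigoh(\tcw(G))$, and, symmetrically, consolidating the gadget fragments into the torso at $t$ raises its torso-size by at most a constant factor, because each gadget is a fixed bounded-size two-terminal graph whose low-degree vertices are eliminated by the $3$-center suppression. Combining these estimates gives $\tcw(\underline{G'})=\bigoh(\tcw(G))$, so the reduction runs in polynomial time and bounds the output parameter by a function of the input parameter; the \W[1]-hardness of \textsc{EDP} parameterized by treecut width therefore transfers to \problem\ restricted to instances with $\cmax=1$, as desired.
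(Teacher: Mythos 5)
Your plan follows the same route as the paper's proof: reduce from \textsc{EDP} parameterized by treecut width, reuse the arc gadget and capacity-$1$ latencies of Theorem~\ref{hard-td} verbatim (so the equivalence argument indeed carries over unchanged), and then bound $\tcw(\underline{G'})$ by lifting an optimal treecut decomposition $(T,\mathcal{X})$ of $G$, placing the four internal vertices of each gadget into a fresh leaf bag of constant torso-size and adhesion $2$.

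There is, however, one genuine gap, and it sits exactly at the step you yourself call delicate: you never specify \emph{where} the fresh leaf bag for the gadget of an edge $uv$ is attached, and the estimate you offer does not hold for an arbitrary attachment. Your claim is also aimed at the wrong danger: an edge of $G$ crossing $\cut(t)$ can contribute at most two gadget edges ($uu'$ and $vv'$) to that cut, and the five edges internal to a gadget never cross any cut of an original node, since both of their endpoints lie in a single bag. The real danger is that an edge $uv$ that does \emph{not} cross $\cut(t)$ may produce gadget edges that do. For instance, if every gadget bag were hung as a child of the root, then for any original non-root node $t$, every edge $uv$ of $G$ with both endpoints in bags in the subtree rooted at $t$ would have \emph{both} $uu'$ and $vv'$ crossing $\cut(t)$; the adhesion of $t$ would then be proportional to the number of such edges, which is not $\bigoh(\tcw(G))$. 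The fix, which is what the paper does, is to attach the gadget bag for $uv$ as a child of a node whose bag contains $u$ or $v$ (the paper takes the deeper of the two, and a single bag when $u,v$ share one). With that placement, $uu'$ crosses no cut of an original node and $vv'$ crosses exactly those cuts that $uv$ crossed, so all original adhesions are preserved \emph{exactly}; moreover, in the torso at the parent node the consolidated gadget vertex has degree $2$, is suppressed when forming the $3$-center, and reinstates precisely the edge that $uv$ contributed before, so original torso-sizes are preserved as well. This gives $\tcw(\underline{G'})\leq\tcw(G)$ whenever $\tcw(G)\geq 5$, which in particular yields your $\bigoh(\tcw(G))$ bound; but without pinning down the attachment point, your ``constant number of crossing arcs per crossing edge'' argument does not bound the adhesions at all.
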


\begin{proof}
\textsc{EDP} is known to be \W[1]-hard parameterized by the treecut width of the graph, even when restricted to the class $K_{3,N}$~\cite{GanianO21}. To establish the theorem, we use the same reduction as in the proof of Theorem~\ref{hard-td}, and prove that the resulting instances also have bounded treecut width. Let us consider an output network $G'$ obtained from that reduction.

Since the input graph $G$ has bounded treecut width, it suffices to show that replacing each of the edges in $G$ by the arc gadget depicted in Figure~\ref{fig:td-1-in-3} (left) does not increase the treecut width by too much.
We actually prove that $\tcw(\underline{G'})\leq \tcw(G)$ since we can assume that $\tcw(G)$ can be assumed to be at least $5$ in the reduction from $\mathcal{I}$ to $\mathcal{I}'$.

Consider a treecut decomposition $(T,\mathcal{X})$ of $G$ of width $\tcw(G)\geq 5$.
From $(T,\mathcal{X})$, we will construct a treecut decomposition $(T',\mathcal{X}')$ of $\underline{G'}$ of width at most $\tcw(G)$.
Initially, set $T'=T$ and $\mathcal{X}'=\mathcal{X}$.
Consider any edge $uv\in E(G)$ and let $U\subset V(G')$ be the four vertices in the arc gadget connecting $u$ and $v$ in $G'$.
If $u$ and $v$ are in $X_t$ for some node $t$ of $T$, then, in $T'$, we add a child $t'$ of $t$ such that $X_{t'}=U$.
If $u$ and $v$ are in $X_{t_1}$ and $X_{t_2}$ for two different nodes $t_1$ and $t_2$ of $T$, and, without loss of generality, the depth of $t_1$ is at least the depth of $t_2$ in $T$, then, in $T'$, we add a child $t'$ of $t_1$ such that $X_{t'}=U$.
In both cases, $\mathcal{X}'$ is updated accordingly.

Let $(T',\mathcal{X}')$ be the treecut decomposition of $\underline{G'}$ obtained from $(T,\mathcal{X})$ by applying the above procedure for each edge in $E(G)$.
For each child $t'$ added to $T$ in the process of obtaining $T'$, there are exactly four vertices in $X_{t'}$ and, in $G'$, they are only adjacent to each other and two vertices $u,v\in X_t$ ($u\in X_{t_1}$ and $v\in X_{t_2}$, respectively) in the first (second, respectively) case. 
Thus, $\tor(t')\leq 5$, $\adh(t')\leq 2$, and the torso-size of any other node in $T'-t'$ in $(T',\mathcal{X}')$ is the same as in $(T,\mathcal{X})$.
Indeed, this may only not be obvious in the case of the torso-size of $t$ in $T'$, but the vertex $z$, resulting from consolidating each vertex of $X_{t'}$ into a single vertex, only has degree~$2$ in $\underline{G'}$, and thus, $z$ will be suppressed and an edge will be added between $u$ and $v$ (which already exists in $G$) when forming the 3-center of $(H_t,\mathcal{X}')$, where $H_t$ is the torso of $(T',\mathcal{X}')$.
Note that the depth of $t'$ is always strictly greater than the depth of $t$ ($t_1$ and $t_2$, respectively) in the first (second, respectively) case.
Thus, the adhesion of any other node in $T'-t'$ in $(T',\mathcal{X}')$ is the same as in $(T,\mathcal{X})$.
Hence, $\tcw(\underline{G'})\leq \tcw(G)$. 
\end{proof}

\section{The Min-Max Atomic Congestion Problem}

In this section, we turn our attention to the more general setting where instead of requiring \emph{all} agents to be routed to their destinations, we allow for some agents to
remain unrouted.
In essence, this asks for a flow assignment that counterbalances the number of agents that reach their destinations with the total cost. As is usual in complexity-theoretic analysis, we state this task as a decision problem where we consider specific bounds on both the cost and the number of agents which need not be routed. For the purposes of this section, we formally extend the notion of \emph{flow assignment} (defined in Section~\ref{sec:prelims}) to a mapping from a subset of agents to paths.

\defproblem{Min-Max System Opt.\ Atomic Congestion (\mproblem)}{A digraph $G=(V,E)$, positive integers $\lambda$ and $\alpha$, a set $A=\{a_1,\dots,a_m\}$ of agents where each agent $a_i$ is associated with a tuple $(s_i,t_i)\in V^2$, 
and for each arc $e\in E$, a latency function $\ell_e: [m] \rightarrow \mathbb{R}_{\geq 0} \cup \{\infty\}$.}{Is there a flow assignment $F$ routing at least $m-\alpha$ agents such that $\sum_{e\in E}f_F(e)\cdot \ell_e(f_F(e)) \leq \lambda$?}

We begin by noting that, in spite of the seemingly minor difference between the two problems, \mproblem\ is much more challenging than \problem\ from a structural point of view. Indeed, on one hand, if we set $\alpha=0$, the min-max variant becomes equivalent to \problem. On the other hand, without this restriction, it can be observed that \mproblem\ remains \NP-hard even on networks with a maximum capacity of $1$ whose skeletons are trees. This follows by an immediate reduction from the \textsc{Maximum Arc Disjoint Paths} problem, which has been shown to be \NP-hard in precisely this setting~\cite{ErlebachJansen01}; the reduction simply replaces each path with an agent and sets the latency function for each arc in the same way as in the proof of Theorem~\ref{hard-td}. 

As the final contribution of this article, we show that \mproblem\ is fixed-parameter tractable when parameterized simultaneously by $\cmax$, $\alpha$, and the combined parameter of treewidth and maximum degree.

\begin{theorem}
\label{thm:msoacfpt}
\mproblem\ is fixed-parameter tractable when parameterized by the treewidth and maximum degree of the skeleton $\underline{G}$ of the input digraph $G$ plus the maximum capacity $\cmax$ and $\alpha$.
\end{theorem}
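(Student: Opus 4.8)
The plan is to lift the dynamic programming algorithm of Theorem~\ref{thm:stcwfpt} to the min-max setting. The starting point is Lemma~\ref{lem:0tcw}: since the skeleton $\underline{G}$ has bounded treewidth $w$ and bounded maximum degree $d$, we can compute in fixed-parameter time a pair $(H,T)$ where $H\supseteq \underline{G}$ and $T$ is a spanning tree of $H$ with both edge-cut width $k\leq \bigoh(d^2w^2)$ \emph{and} bounded maximum degree. The crucial new feature compared to Theorem~\ref{thm:stcwfpt} is that $T$ now has bounded degree, which means every vertex $v$ has only $\bigoh(d^2w^2)$ children total (there are no unboundedly many ``simple'' children to worry about); this will be what lets us afford to track the extra bookkeeping needed for unrouted agents.

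The key modification is to enrich the snapshot records to account for the at most $\alpha$ agents that may be left unrouted. Concretely, I would augment each snapshot $(\Sout,\Sin,\D,\R)$ at a vertex $v$ with one additional coordinate recording how many agents have been ``dropped'' so far within $\forgottenG_v$ --- that is, agents $a_i$ with both $s_i,t_i\in\forgottenG_v$ that are not routed, plus the disposition of outgoing/incoming agents who are abandoned. Since at most $\alpha$ agents total may go unrouted, this counter ranges over $\{0,1,\dots,\alpha\}$, and an outgoing or incoming agent for $v$ can additionally be flagged as ``unrouted'' rather than being mapped to an arc of $\partial_v$. This multiplies the number of snapshots by at most $(\alpha+2)^{\bigoh(\cmax k)}$, which is still bounded by a function of the parameters. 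The semantics of $\Rec(v)(\Upsilon)$ then becomes the minimum cost of a flow assignment in $\mathcal{I}_\Upsilon$ that routes all non-dropped designated agents, consistent with the snapshot's declared number of locally dropped agents.

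The dynamic program itself proceeds exactly as before, with two adjustments. When combining the records of children at a non-leaf $v$ on the kernelized instance $\mathcal{I}^+_\Upsilon$, the total number of dropped agents at $v$ is the sum of the locally dropped agents plus those inherited from the children's chosen snapshots, and we reject any combination whose cumulative drop count exceeds $\alpha$. Because $T$ has bounded degree, the brute-force over flow assignments on $\mathcal{I}^+_\Upsilon$ --- now also guessing which outgoing/incoming agents at each child are routed versus dropped --- remains over a parameter-bounded search space, so the per-vertex running time stays of the form $g(d,w,\cmax,\alpha)$. At the root $r$, we read off the minimum-cost flow over all snapshots $\Upsilon_r$ whose drop counter is at most $\alpha$, and accept iff this value is at most $\lambda$.

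The main obstacle I anticipate is handling the interaction between unrouted agents and the ``marker'' agents of $\D$ and $\R$ at the boundary $\partial_v$. In Theorem~\ref{thm:stcwfpt} every outgoing agent is forced to some arc of $\partial_v$ and the records assume all flow is fully realized; now an agent that starts inside $\forgottenG_v$ but is dropped contributes nothing to the cut, so the consistency check between a child's declared boundary behavior and the parent's reconstructed flow must carefully distinguish ``this agent left via arc $e$'' from ``this agent was abandoned.'' I would resolve this by treating a dropped outgoing/incoming agent as simply absent from $\Sout$/$\Sin$ while accounting for it in the drop counter, and then verifying in the combination step that the flow $F$ on $\mathcal{I}^+_\Upsilon$ routes precisely the non-dropped agents and no others. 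Establishing that this bookkeeping correctly preserves the optimality argument --- namely that an optimal min-max assignment $Q$ induces, at each $v$, a unique enriched snapshot whose cost the program computes --- is the delicate part, but it follows the same leaf-to-root inductive structure as in Theorem~\ref{thm:stcwfpt}, now additionally tracking that the global drop count never exceeds $\alpha$.
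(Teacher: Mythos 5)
Your proposal is correct and follows essentially the same route as the paper's proof: invoking Lemma~\ref{lem:0tcw} to get a bounded-degree spanning tree (so every child can be treated as ``complex'' and the simple-child machinery disappears), enriching the snapshots with a drop counter $\alpha'\leq\alpha$ and routed/unrouted status for boundary agents, and combining children's records with a consistency check on which boundary agents each child actually routes --- which is precisely the paper's branching over the set of ``important'' agents and the per-child drop-count mapping $\beta$. The only detail to tighten is the snapshot count: since dropped agents need not cross the cut, the number of incoming/outgoing agents per node is only bounded by $\alpha+\cmax(k+1)$ rather than $\cmax(k+1)$, so the early NO-instance check and the exponent in your snapshot bound must incorporate $\alpha$, exactly as the paper does.
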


\begin{proof}
On a high level, the algorithm follows the same overall approach as the one employed in the proof of Theorem~\ref{thm:stcwfpt}; however, the dynamic programming steps and records required here are different (and more complicated). We begin by invoking Lemma~\ref{lem:0tcw} to compute a spanning tree $T$ over a supergraph $H$ of $\underline G$ such that there is an integer $k\in \bigoh(d^2w^2)$ which is an upper-bound for both the edge-cut width of $(H,T)$ and the maximum degree of $T$. As in the proof of Theorem~\ref{thm:stcwfpt}, we mark an arbitrary leaf of $T$ as the root and denote it~$r$.

Recalling the notions of outgoing and incoming agents for a node $v$ from Theorem~\ref{thm:stcwfpt}, it will be useful to observe that if $T$ contains a node $v$ such that the number of outgoing or incoming agents in $\forgottenG_v$ exceeds $\alpha+\cmax\cdot (k+1)$, then we can immediately recognize the input as a NO-instance; this is because at most $\cmax\cdot (k+1)$ paths can traverse the edges forming the cut between $\forgottenG_v$ and the rest of the network. Hence, we hereinafter assume that the number of incoming and outgoing agents are both bounded by $\alpha+\cmax\cdot (k+1)$, and proceed to formalizing the dynamic programming records used in our algorithm.
We define a similar notion of a \emph{snapshot} at a vertex $v\in V(H)$, albeit here snapshots are tuples of the form $(\Aout,\Ain,\alpha',\Sout,\Sin,\D,\R)$ where:
	\begin{itemize}[topsep=0pt]
\item $\Aout$ is a subset of the set of all outgoing agents for $v$;
\item $\Ain$ is a subset of the set of all incoming agents for $v$;
\item $\alpha'\leq \alpha$ is a non-negative integer;
		\item $\Sout$ is a mapping from $\Aout$ to arcs in $\partial_v$ which are outgoing from $\forgottenG_v$;
		\item $\Sin$ is a mapping from $\Ain$ to arcs in $\partial_v$ which are incoming to $\forgottenG_v$;
		\item $\D$ is a multiset of pairs $(e,f)$ such that $e$ is an incoming arc into $\forgottenG_v$, $f$ is an outgoing arc from $\forgottenG_v$, and $ef$ is not a $2$-cycle;
		\item $\R$ is a multiset of pairs $(e,f)$ such that $e$ is an outgoing arc from $\forgottenG_v$, $f$ is an incoming arc into $\forgottenG_v$, and $ef$ is not a $2$-cycle;
		\item each arc in $\partial_v$ may only appear in at most $\cmax$ tuples over all of the entries in $\Sout, \Sin, \D, \R$.
	\end{itemize}

Comparing the snapshots defined above with those used in the proof of Theorem~\ref{thm:stcwfpt}, the only difference is that we (1) use $\alpha'$ to keep track of how many agents have been left unrouted in $\forgottenG_v$, and (2) we use $\Aout$ and $\Ain$ to keep track of the identities of outgoing and incoming agents which are not being routed. Hence, we can upper-bound the total number of possible snapshots for an arbitrary vertex $v$ as a product of the number of possible snapshots in the proof of Theorem~\ref{thm:stcwfpt} (i.e., $(\cmax+k)^{\bigoh(\cmax k^2)}$) times the number of choices for $\alpha'$ (i.e., $\alpha$) times the number of choices for $\Ain$ and $\Aout$ (i.e., $2^{\bigoh(\alpha+\cmax\cdot k)}$). Altogether, this yields an upper-bound of $(\cmax+k)^{\bigoh(\alpha \cdot \cmax \cdot k^2)}$. 

The syntax of the record at $v$, denoted $\Rec(v)$ is a mapping from $\snap(v)$ to $\mathbb{R}_{\geq 0}\cup \{\infty\}$. For the semantics, $\Rec(v)$ will capture the minimum cost required to (1) route the outgoing and incoming agents specified in $\Aout$ and $\Ain$ to the designated arcs in $\Sout$ and $\Sin$ while leaving the remaining outgoing and incoming agents unrouted, under the following three assumptions: (2) some (unidentified and arbitrary) agents will use arcs of $\partial_v$ to enter and then exit $\forgottenG_v$ precisely via the arcs designated in $\D$, (3) some (unidentified and arbitrary) agents will use arcs of $\partial_v$ to exit and then return to $\forgottenG_v$ via the arcs designated in $\R$, and (4) precisely $\alpha'$ many agents with at least one endpoint in $\forgottenG_v$ are not routed to their destinations. 

Formalizing the above, $\Rec(v)$ maps each snapshot $\Upsilon$ to the minimum cost of a flow assignment $F$ in the subinstance $\mathcal{I}_\Upsilon$ induced on the vertices of $\forgottenG_v$ plus the arcs\footnote{As before, only one of the endpoints of $\partial_v$ lies in $\forgottenG_v$, and hence, these are not formally arcs in $\forgottenG_v$.} of $\partial_v$ with the following properties:
\begin{enumerate}
\item for each pair $(ab,cd)$ of arcs in the multiset $\D$ such that $b,c\in \forgottenG_v$, we add a new marker agent into $I_\Upsilon$ which starts at $a$ and ends at $d$;
\item for each pair $(ab,cd)$ of arcs in the multiset $\R$ such that $a,d\in \forgottenG_v$, we add a new arc $bc$ into $\mathcal{I}_\Upsilon$ and set $\ell_{bc}:=\{1\mapsto 0\}\cup \{i\mapsto \infty~|~i>1\}$.
\item for each outgoing agent $a_i$ in $\Aout$, $F$ contains a path from $s_i$ into the arc $\Sout(a_i)$;
\item for each incoming agent $a_i$ in $\Ain$, $F$ contains a path from (and including) the arc $\Sin(a_i)$ to $t_i$;
\item $F$ routes all agents with at least one endpoint in $F$ except for $\alpha'$ many. Moreover, each outgoing (incoming, respectively) agent that is not in $\Aout$ ($\Ain$, respectively) is not routed by $F$.
\end{enumerate}

We remark that if no flow with these properties exists, we simply set $\Rec(v)(\Upsilon):=\infty$. This completes the formal definition of the records $\Rec(v)$. Observe that since $\forgottenG_r=G$ and $\partial_r=\emptyset$, the only snapshots at $r$ are of the form $\mathcal{I}_{\alpha'}=(\emptyset,\emptyset,\alpha',\{\emptyset\},\{\emptyset\},\emptyset,\emptyset)$, and each of these snapshots is mapped by $\Rec(r)$ to the minimum cost of a flow assignment in the input network which routes all but precisely $\alpha'$ agents to their destinations. Hence, $\min_{\alpha'\in [\alpha]}\Rec(r)(\mathcal{I}_{\alpha'})$ is the minimum cost of a flow assignment $F$ which routes at least $m-\alpha$ agents, and thus, to conclude the proof
it remains to show how to compute the records at each vertex in a leaf-to-root fashion.

Towards this task, let us first consider the computation of $\Rec(v)$ for a leaf $v$ in $T$. This step is based on exhaustive branching and is entirely analogous to the one employed in the proof of Theorem~\ref{thm:stcwfpt}, but with the distinction that we also branch on $\Ain$ and $\Aout$ (it is worth noting that, for leaves, each choice of $\Ain$ and $\Aout$ fully determines the value of $\alpha'$). More precisely, we observe that each of the constructed instances $\mathcal{I}_\Upsilon$ consists of the at most $2k+2$ arcs in $\partial_v$, the at most $(k+1)\cdot \cmax$ arcs obtained from $\R$, and the at most $(2k+2)\cdot \cmax$ agents that are in $\Ain$ or $\Aout$ for $v$. The number of paths in such an instance is upper-bounded by $2^{\bigoh(k\cdot \cmax)}$. Hence, the minimum cost of a flow assignment $F$ in $\mathcal{I}_\Upsilon$ that routes all the agents in $\Ain$ and $\Aout$ can be computed by enumerating all flow assignments in time at most $\cmax^{2^{\bigoh(k\cdot \cmax)}}$.

The core of the dynamic program lies in the computation of $\Rec(v)$ for a non-leaf vertex $v$. We do so by considering each snapshot $\Upsilon=(\Aout,\Ain,\alpha',\Sout,\Sin,\D,\R)$ at $v$ independently, and computing the minimum cost of a flow assignment $F$ in the subinstance $\mathcal{I}_\Upsilon$ as follows. Unlike in Theorem~\ref{thm:stcwfpt}, we directly construct the instance $\mathcal{I}_\Upsilon$ for which we need to compute the minimum cost of a flow assignment, as here we do not distinguish between ``simple'' and ``complex'' children, followed by a construction of the same ``kernelized'' instance $\mathcal{I}^+_\Upsilon$ as earlier (whereas this time, we treat \emph{every} child of $v$ as complex).  
In particular, for each child $w$ of $v$, we 
\begin{enumerate}
\item delete every vertex in $\forgottenG_w$ except for the endpoints of $\partial_w$;
\item for each pair of endpoints $a,b\in V(\forgottenG_w)$ of distinct arcs in $\partial_w$, add bi-directional ``marker'' arcs between $a$ and $b$;
\item create a vertex $w_\text{out}$, add a directed arc from $w_\text{out}$ to every endpoint of an arc $\partial_w$ in $V(\forgottenG_w)$, and for outgoing agent $a_i$ at $w$ we place $s_i$ on $w_\text{out}$;
\item create a vertex $w_\text{in}$, add a directed arc from every endpoint of an arc $\partial_w$ in $V(\forgottenG_w)$ to $w_\text{in}$, and for an incoming agent $a_i$ at $w$ we place $t_i$ on $w_\text{in}$.
\end{enumerate}

As before, the newly created arcs are not associated with a latency function (or alternatively may be assumed to be associated with the degenerate latency function $\mathbb{N}\rightarrow \{0\}$), and we observe that the number of vertices in $\mathcal{I}^+_\Upsilon$ is upper-bounded by $(k\cdot 2k)+1\leq \bigoh(k^2)$. The number of paths in such an instance is upper-bounded by $k^{\bigoh(k^2)}$, and hence, the total number of all possible flow assignments (regardless of the subset of routed agents) can be upper-bounded by $\cmax^{k^{\bigoh(k^2)}}$. We proceed by branching over each choice of a potential flow assignment $F$ from this set of all potential flow assignments. 

Next, we deal with the fact that $F$ need not be a flow assignment of all agents by performing an additional bounded branching step to determine which of the outgoing and incoming agents from each child of $v$ is actually routed by $F$. To this end, let an agent be \emph{important} for $v$ if it is an incoming or outgoing agent for at least one child of $v$, but is neither an incoming nor outgoing agent for $v$; in other words, important agents are those which are explicitly tracked by the snapshots of the children of $v$, but are no longer tracked by snapshots of $v$ itself. The number of important agents is upper-bounded by $k\cdot (\alpha+\cmax\cdot k)$, and we branch over each subset $\Imp$ of important children for $v$. Finally, we branch over all mappings $\beta$ from the children of $v$ to $[\alpha]$.

For each fixed $F$, $\Imp$, and $\beta$, we check that $F$ routes the agents in $\Imp$ to their final destinations and the agents in $\Ain$ and $\Aout$ to their assigned arcs as per $\Sin$ and $\Sout$, respectively. As previously in the proof of Theorem~\ref{thm:stcwfpt}, we also perform a further technical check to ensure that each path in $P$ only contains a marker arc $e$ in a child $w$ if $e$ is immediately preceded and also succeeded by an arc in $\partial_w$. If these checks succeed, we view $F$ as a ``projection'' of a flow assignment in $\mathcal{I}_\Upsilon$ onto $\mathcal{I}^+_\Upsilon$. In particular, $F$ restricted to the arcs with at most one endpoint in a child $w$ of $v$ fully determines (1) the first arc in $\partial_w$ used by an outgoing agent for $w$, (2) the last arc in $\partial_w$ used by an incoming agent for $w$, (3) which pairs of arcs in $\partial_w$ are used by paths to leave and subsequently re-enter $\forgottenG_w$, and (4) which pairs of arcs in $\partial_w$ are used by paths to enter and subsequently leave $\forgottenG_w$. This information, combined with information about which outgoing and incoming agents for $w$ are actually routed by the flow (specified in $\Imp$) and information about how many total agents with an endpoint in $\forgottenG_w$ are not routed by the flow (specified in $\beta(w)$), hence fully identifies a unique snapshot $\Psi^w_F$ of $w$. We define the cost of $F$ as $b(v)+\sum_{w\text{ is a child of }v}\Rec(w)(\Psi^w_F)$. Finally, we set $\Rec(v)(\Upsilon)$ to be the minimum cost of a flow $F$ in $\mathcal{I}^+_\Upsilon$ which satisfies the conditions stipulated above; this concludes the description of the algorithm.

The running time of the algorithm can be upper-bounded by the number of vertices in the input digraph times the cost of processing each vertex, whereas the latter is dominated by $\cmax^{(dw)^{\bigoh(d^4w^4)}}\cdot 2^{d^2w^2\cdot (\alpha+\cmax\cdot d^2w^2)} \cdot \alpha^{w^2d^2}$, i.e., by $2^{(d\cdot w\cdot \alpha\cdot \cmax)^{\bigoh(d^4w^4)}}$. The algorithm can be made constructive in the same way as in Theorem~\ref{thm:stcwfpt}. For correctness, we argue that if the input instance admits a flow assignment $Q$ of at most $\alpha$ agents with some minimum cost, say $p$, then the algorithm will compute a flow assignment with the same cost $p$. Towards this goal, we observe that at each vertex $v$ considered in the leaf-to-root pass made by the dynamic program, $Q$ will correspond to a unique snapshot $\Upsilon$ of $v$. At each leaf $v$ of $T$, $\Rec(v)(\Upsilon)$ must be equal to the cost incurred by $Q$ on the arcs in $\partial_v$ due to the nature of our brute-force computation of the records for trees and the optimality of $Q$. Moreover, for each non-leaf node $v$ it holds that as long as we have correctly computed the records for each of its children, the traversal of $Q$ via the arcs in $\partial_v$ and $\partial_w$ for each child of $w$ identifies a unique valid flow assignment $F$ in $\mathcal{I}^+_\Upsilon$. Moreover, from $Q$ we can also recover a unique set $\Imp$ of important agents for $v$ as well as a unique mapping $\beta$ which specifies how many agents were not routed among those with at least one endpoint in each of the children of $v$. These sets altogether define a snapshot $\Psi^w_F$ for each child $w$ of $v$. In that case, however, $Q$ must indeed incur a cost of $b(v)+\sum_{w\text{ is a child of }v}\Rec(w)(\Psi^w_F)$ over all arcs in $\mathcal{I}_\Upsilon$, as desired.
\end{proof}

\section{Concluding Remarks}
Our results provide an essentially comprehensive complexity landscape for the problem of computing system-optimal flow assignments in atomic congestion games, closing a gap in the literature that contrasts with the significant attention other aspects of congestion games have received to date. We remark that our tractability results only require the input network to have the necessary structural properties and do not impose any restrictions on the possible origins and destinations of the agents. Moreover, all of the obtained algorithms can also be used to compute Nash-equilibria in atomic congestion games as long as an upper-bound on the cost of the flow is provided in the input. Future work could also consider the recently proposed setting of having some agents follow a greedily computed route~\cite{SharonARBS18}.

Another interesting avenue for future work would be to resolve the complexity of the min-max variant of the problem (i.e., \mproblem) on well-structured networks of unbounded degree. This problem is left open even on stars when parameterized by $\cmax+\alpha$, and we believe novel ideas will be required to breach this barrier; in particular, the techniques for solving the maximization variant of the related \textsc{Arc Disjoint Paths} problem on stars~\cite{ErlebachJansen01} do not generalize to \mproblem. As a longer-term goal, one would be interested in settling whether Theorem~\ref{thm:msoacfpt} could be lifted towards an analog of Theorem~\ref{thm:stcwfpt} that relies on the same structural measures of the network.

\section{Acknowledgements}

The first, second, and fourth authors were supported by the Austrian Science Fund (FWF, project Y1329). The third author was supported by SERB-DST via grants MTR/2020/000497 and CRG/2022/009400.

\bibliographystyle{plain}
\bibliography{aaai24bib}

\end{document}